\documentclass[11pt]{amsart}

\usepackage[utf8]{inputenc}
\usepackage[T1]{fontenc}
\usepackage{lmodern}
\usepackage{amsmath,amssymb,amsthm,mathtools,mathrsfs}
\usepackage[bookmarksnumbered=true,colorlinks=true,linkcolor=blue,citecolor=blue,urlcolor=blue]{hyperref}
\usepackage{comment}

\newtheorem{definition}{Definition}[section]
\newtheorem{lemma}[definition]{Lemma}
\newtheorem{proposition}[definition]{Proposition}
\newtheorem{theorem}[definition]{Theorem}
\newtheorem{corollary}[definition]{Corollary}
\theoremstyle{definition}

\numberwithin{equation}{section}
\allowdisplaybreaks

\newcommand{\R}{\mathbb R}
\newcommand{\C}{\mathbb C}
\newcommand{\Z}{\mathbb Z}
\newcommand{\dpm}{\frac{dp}{(2\pi)^3}}
\newcommand{\1}{\mathbf 1}

\title[Derivation of the BCS functional from the BCS Hamiltonian]
{A simple derivation of the zero-temperature BCS functional from the reduced BCS Hamiltonian}
\author[C. Hainzl]{Christian Hainzl}
\address{Mathematisches Institut, Ludwig-Maximilians-Universit\"at M\"unchen,
Germany}
\email{hainzl@math.lmu.de}
\author[R. Panza]{Riccardo Panza}
\address{Mathematisches Institut, Ludwig-Maximilians-Universit\"at M\"unchen,
Germany}
\email{panza@math.lmu.de}

\date{}

\dedicatory{Dedicated to Robert Seiringer on the occasion of his 50th birthday}

\begin{document}

\begin{abstract}
We consider a grand-canonical reduced BCS Hamiltonian on the full fermionic Fock space with an attractive pairing interaction supported in an energy shell around the Fermi surface. For fixed chemical potential, we prove that the exact finite-volume ground-state energy differs from the minimum of the corresponding finite-volume BCS functional by a volume-independent error. As a consequence, the BCS variational principle is exact for the ground-state energy density in the thermodynamic limit.
\end{abstract}

\maketitle

\section{Introduction}\label{sec:introduction}

The reduced BCS Hamiltonian is one of the simplest effective many-body models in which the Cooper-pairing mechanism can be studied directly \cite{BCS1957}. We work with spin-$\frac{1}{2}$ fermions in a periodic box \(\Lambda_L=[-L/2,L/2]^3\), at chemical potential \(\mu>1\),
and take the pairing interaction to be supported on the momenta in
\[
 \Omega_{\mu}
 :=
 \{p\in\R^3:\bigl||p|^2-\mu\bigr|\le1\}.
\]
Even though our proof works for a more general rank-one interaction, we stick to the simpler physically relevant case (for the physics derivation of such a nonlocal interaction, we refer to standard textbooks such as \cite{Fetter}).\\
In this setting, the reduced BCS Hamiltonian reads as follows, as an operator on the fermionic Fock space $\mathcal{F}_f(L^2(\Lambda_L; \C^2))$:
\begin{equation}\label{eq:ReducedHam}
    H_{L, \mu} := \sum_{p \in \Lambda_L^* \atop \sigma \in \{\uparrow, \downarrow\}}(|p|^2 - \mu)a^*_{p, \sigma}a_{p,\sigma} - \frac{g}{\sqrt{\mu}L^3}\sum_{p,q \in \Lambda_L^* \cap \Omega_{\mu}}b^*_p b_q\;,
\end{equation}
where $a^*_{p,\sigma}, a_{p, \sigma}$ are the usual creation and annihilation operators for plane waves of momentum $p \in \Lambda_L^* := \frac{2\pi}{L}\Z^3$ and spin $\sigma \in \{\uparrow, \downarrow\}$, satisfying the canonical anticommutation relations (CAR) $\{a_{p, \sigma}, a^*_{q, \tau}\} = \delta_{p,q}\delta_{\sigma, \tau}$. Here $b^*_p := a^*_{p, \uparrow} a^*_{-p, \downarrow}$ creates a pair of fermions with opposite momenta and spin.
\\
Moreover, \(g>0\) is a fixed coupling constant. The scaling \(\mu^{-1/2}\) is chosen to compensate, in the high-density limit, for the growth of the number of momentum states participating in the interaction inside the shell \(\Omega_\mu\). With this choice, the effective pairing strength remains of order one and, in particular, the BCS gap stays of order one as \(\mu\to\infty\).

We are interested in the ground-state energy
\begin{equation}
    E_0(L, \mu) := \inf_{\psi \in \mathcal{F}_f}\frac{\langle \psi, H_{L, \mu} \psi \rangle}{\langle \psi, \psi \rangle}\;,
\end{equation}
and in its relation, in the thermodynamic limit \(L\to\infty\), to the minimum of the BCS functional defined in \eqref{eq:BCS-functional}.

\subsection{Setting and main result}
To formulate our main result, we first introduce the finite- and infinite-volume BCS energy functionals associated with the separable interaction considered above. Their basic properties, including the characterization of their minimizers and the thermodynamic limit, are discussed in Section \ref{sec:functionals}. We begin with the standard zero-temperature BCS energy functional, defined for $\alpha: \R^3 \to \C$ satisfying $|\alpha(p)| \leq \frac{1}{2}$ by
\begin{equation}\label{eq:BCS-functional}
    \mathcal{F}^{BCS}_{\mu}[\alpha] := \int_{\mathbb R^3}|\,|p|^2-\mu\,|
 \left(1-\sqrt{1-4|\alpha(p)|^2}\right)\frac{dp}{(2\pi)^3} - \frac{g}{\sqrt{\mu}}\left|\int_{\Omega_{\mu}}\alpha(p)\,\frac{dp}{(2\pi)^3} \right|^2\;,
\end{equation}
and denote its minimum by
\begin{equation}
    e_{\rm BCS}(\mu) := \inf_{\alpha} \mathcal{F}^{\rm BCS}_{\mu}[\alpha]\;.
\end{equation}
The Euler--Lagrange equations associated with this variational problem can be written in the following self-consistent form, where $\chi_{\Omega_{\mu}}$ denotes the characteristic function of $\Omega_{\mu}$:
\begin{equation}\label{eq:minimizers-form}
    \alpha(p) = \frac{\Delta}{2\sqrt{|\,|p|^2-\mu\,|^2 + |\Delta|^2}}\chi_{\Omega_{\mu}}(p)\;, \qquad \Delta = \frac{g}{\sqrt{\mu}}\int_{\Omega_{\mu}}\alpha(p)\,\frac{dp}{(2\pi)^3}\,.
\end{equation}
Since the interaction is separable, the variational problem can equivalently be reduced to a scalar minimization over the complex gap parameter $\Delta$:
\begin{equation}\label{eq:intro-shell-BCS}
 e_{\rm BCS}(\mu)
 =\inf_{\Delta\in\mathbb C}
 \left\{
 \frac{\sqrt\mu}{g}|\Delta|^2
 +\int_{\Omega_\mu}
 \left(|\,|p|^2-\mu\,|-\sqrt{(|p|^2-\mu)^2+|\Delta|^2}\right)
 \frac{dp}{(2\pi)^3}
 \right\}.
\end{equation}
Since our main result concerns the finite-volume many-body Hamiltonian, we also introduce the corresponding finite-volume scalar BCS energy. Let $\Omega_{L, \mu}:= \Lambda_L^* \cap \Omega_{\mu}$ and define $\mathscr {E}_{L,\mu}^{BCS}: \C \to \R$ by
\begin{equation}\label{eq:BCS-scalar-finite-vol}
    \mathscr{E}_{L,\mu}^{\rm BCS}(\Delta) := \frac{\sqrt{\mu}L^3}{g}|\Delta|^2 + \sum_{p \in \Omega_{L, \mu}}\left(||p|^2 - \mu| - \sqrt{||p|^2 - \mu|^2 + |\Delta|^2} \right)\;,
\end{equation}
We denote the corresponding minimum by $E_{L, \mu}^{BCS} = \min_{\Delta}\mathscr{E}_{L,\mu}^{\rm BCS}(\Delta)$.\\
Finally, we denote the energy of the free Fermi sea by
\begin{equation}
   E_{\rm FS}(L, \mu) := 2 \sum_{p \in \Lambda_L^*\,: \atop |p|^2 < \mu} (|p|^2 - \mu)\;\,.
\end{equation}
We can now state our main finite-volume estimate.
\begin{theorem}[Finite-volume shell estimate]\label{thm:main}
There exist a sidelength $L_0 > 0$ and a constant $C_{\mu, g}$ such that, for every $L \geq L_0$,
\begin{equation}\label{eq:main-general}
 E_{\rm FS}(L,\mu)+E^{\rm BCS}_{L,\mu} - C_{\mu, g}\le E_0(L,\mu)
 \le E_{\rm FS}(L,\mu)+E^{\rm BCS}_{L,\mu}.
\end{equation}
The constant $C_{\mu, g}$ depends only on $\mu > 1$ and $g > 0$ and, in particular, is independent of the sidelength $L$.
\end{theorem}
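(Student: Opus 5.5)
The plan is to prove the two inequalities in \eqref{eq:main-general} separately. The upper bound comes from a BCS trial state. The lower bound comes from reducing $H_{L,\mu}$ to a pseudospin model on the shell and controlling the quadratic pairing term to within an $L$-independent error. The lower bound is the step I am least sure of, and I identify below where it could fail.

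\emph{Step 1 (structure and upper bound).} Momenta $p\notin\Omega_{L,\mu}$ do not interact. Their contribution to the kinetic energy is minimized by filling exactly the modes with $|p|^2<\mu$, which gives the part of $E_{\rm FS}(L,\mu)$ coming from outside the shell. For the upper bound I take a quasi-free BCS state, namely the Fermi sea outside the shell tensored with
\[
\prod_{p\in\Omega_{L,\mu}}\bigl(u_p+v_p b_p^*\bigr)|0\rangle .
\]
Here $u_p,v_p$ are the Bogoliubov coefficients of \eqref{eq:minimizers-form}, with the integral replaced by the lattice sum and $\Delta$ the minimizer of $\mathscr E^{\rm BCS}_{L,\mu}$. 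By Wick's theorem
\[
\langle b_p^*b_q\rangle=\bar\alpha_p\alpha_q \ \ (p\neq q),\qquad \langle b_p^*b_p\rangle=\gamma_p,
\]
so the diagonal terms differ from the mean-field expression only by $-\tfrac{g}{\sqrt\mu L^3}\sum_p(\gamma_p-|\alpha_p|^2)\le0$. Completing the algebra with $\sum_p 2\epsilon_p\gamma_p$ and writing $\epsilon_p=|p|^2-\mu$ reproduces $\mathscr E^{\rm BCS}_{L,\mu}(\Delta)$. The shift $|\epsilon_p|-\epsilon_p$ is what turns $2\sum\epsilon_p\gamma_p$ into the functional relative to the Fermi sea.

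\emph{Step 2 (pseudospin reduction for the lower bound).} For each $p$ in the shell I group the four modes $(p\uparrow,-p\downarrow,-p\uparrow,p\downarrow)$. The operator $n_{p\uparrow}-n_{-p\downarrow}$ commutes with $H_{L,\mu}$. On the sector where it is nonzero, $b_p$ and $b_p^*$ act as zero, so the pair is blocked and only contributes kinetic energy $\epsilon_p$. On the sector where it is zero, the pair is an Anderson pseudospin $s_p$, with $b_p^*=s_p^+$ and $2\epsilon_p\bigl(s^z_p+\tfrac12\bigr)$ as kinetic term.

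A blocked pair costs at least $|\epsilon_p|$ relative to the best unblocked configuration, up to the interaction loss. Removing a site changes the interaction by at most $\tfrac{g}{\sqrt\mu L^3}\cdot O(|\Omega_{L,\mu}|)=O(1)$ per site. I would therefore first argue that it suffices to treat the unblocked sector, possibly with a small blocked set, and so reduce to
\[
h=\sum_p 2\epsilon_p s_p^z-\frac{g}{\sqrt\mu L^3}J^+J^-,\qquad J^\pm=\sum_p s_p^\pm .
\]
Here $|\Omega_{L,\mu}|\sim\sqrt\mu L^3$.

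\emph{Step 3 (mean-field lower bound).} I would write $J^+J^-=J^-J^++[J^+,J^-]$ with $[J^+,J^-]=2J^z$. Since $\|J^z\|\le|\Omega_{L,\mu}|/2$, the commutator costs at most $g|\Omega_{L,\mu}|/(\sqrt\mu L^3)\le C_{\mu,g}$, which is an $L$-independent constant. This is the source of the volume-independent error.

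For the remaining quadratic term I would linearize by a Hubbard--Stratonovich / Berezin--Lieb argument. Write the thermal trace using the Gaussian identity
\[
e^{\beta J^+J^-/\lambda}\sim\int e^{\beta(\bar zJ^-+zJ^+)-\beta\lambda|z|^2}\,d^2z,\qquad \lambda=\sqrt\mu L^3/g .
\]
Apply a coherent-state (lower symbol) inequality in the auxiliary variable, then let $\beta\to\infty$. For fixed $z$, the linear Hamiltonian $\sum_p\bigl(2\epsilon_ps^z_p-\bar z s_p^- - z s_p^+\bigr)+\lambda|z|^2$ is a sum of commuting single-site terms. Its ground energy is exactly $\lambda|z|^2-\sum_p\sqrt{\epsilon_p^2+|z|^2}$ plus constants, and with $\Delta=z$ this is $\mathscr E^{\rm BCS}_{L,\mu}(\Delta)$ minus the Fermi sea shift.

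\emph{Main obstacle.} The delicate point is the non-commutativity in Step 3. The naive Golden--Thompson and Berezin--Lieb bounds lose either a multiplicative factor, because the $P$-symbols of spin-$\tfrac12$ operators carry factors $s+1$ instead of $s$, or an error growing like $\sqrt{|\Omega_{L,\mu}|}$.

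What I would try first is to use the auxiliary bosonic mode, with coherent states $|z\sqrt\lambda\rangle$, only for the \emph{collective} variable $J^-/\sqrt\lambda$. The upper and lower symbols of $a^*a$ then differ by exactly $1$, so the total error is $\tfrac{g}{\sqrt\mu L^3}\cdot O(\lambda)\cdot\lambda^{-1}$ together with the commutator term above, both $O(1)$. If this fails, the fallback is to exploit exact solvability in the fixed-$J^z$ sectors via Richardson's equations. That route would be substantially more technical.
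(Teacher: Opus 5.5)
Your Step 1 is correct and is the paper's upper-bound argument. Your diagonal correction $\gamma_p-|\alpha_p|^2=\gamma_p^2$ is exactly the paper's extra term.

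The lower bound, which is the substance of the theorem, is not proved. Step 3 is a heuristic plus candidate strategies, and the one you would try first does not close. The Gaussian identity you write is false for non-commuting $J^\pm$. Realize it with an auxiliary boson $d$, i.e.\ $K=T+d^*d-\lambda^{-1/2}(d^*J^-+dJ^+)$ with $T=\sum_p 2\epsilon_p s_p^z$. Then the $P$-representation of $d^*d$, $d$, $d^*$ in coherent states $|\sqrt\lambda z\rangle$ does give $E_0(K)\ge\inf_z\{\lambda|z|^2+E_0(T-\bar zJ^--zJ^+)\}-1$. But completing the square gives $K=h\otimes 1+(d-\lambda^{-1/2}J^-)^*(d-\lambda^{-1/2}J^-)\ge h\otimes 1$, so this is a lower bound on something \emph{larger} than $h$. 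To transfer it to $h$ you would need $E_0(K)\le E_0(h)+C$. That requires a state in which the boson follows the operator $J^-/\sqrt\lambda$ with $O(1)$ fluctuation energy and only an $O(1)$ change in $\langle T\rangle$. Since $J^-$ is not normal and does not commute with $T$, this is precisely the non-commutativity problem again. Your error count, which only tracks the symbol mismatch of $d^*d$, does not address it.

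Step 2 also has a hole. The number of blocked sites is not controlled, because blocking costs only $|\epsilon_p|$, which is arbitrarily small near the Fermi surface. An $O(1)$ interaction loss per blocked site therefore does not give a volume-independent error.

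The missing ingredient is a quantitatively controlled, operator-valued choice of the mean field. The paper proceeds in three steps.
\begin{enumerate}
\item It completes the square around a c-number: $H_{L,\mu}=H_{\rm out}+H_{\rm app}(\omega)-\frac{\sqrt\mu L^3}{g}(Z-\omega)^*(Z-\omega)$.
\item It bounds $H_{\rm out}+H_{\rm app}(\omega)\ge E_{\rm FS}(L,\mu)+\mathscr E^{\rm BCS}_{L,\mu}(\omega)$ by diagonalizing the $4\times4$ blocks $h_p(\omega)$. Blocked states there have energy $\epsilon_p\ge\epsilon_p-\sqrt{\epsilon_p^2+|\omega|^2}$, so they are covered automatically and your Step 2 is unnecessary.
\item It localizes the spectra of $X=\frac12(Z+Z^*)$ and $Y=\frac1{2i}(Z-Z^*)$ with an IMS partition of unity at scale $\ell^2=g/(\sqrt\mu L^3)$, taking $\omega=(j+ik)\ell$ in each cell.
\end{enumerate}
Since $\|[X,Y]\|$ and $\|[X,[H,X]]\|+\|[Y,[H,Y]]\|$ are $O(L^{-3})$, the residual fluctuation $O(L^3\ell^2+L^3\|[X,Y]\|^2\ell^{-2})$ and the IMS error $O(\ell^{-2}L^{-3})$ are both $O(1)$. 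A working version of your coherent-state idea would amount to smearing $Z$ at this same scale $\lambda^{-1/2}$, and would need these same commutator estimates.
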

Thus, the finite-volume BCS energy approximates the exact many-body ground-state energy up to an error that remains bounded uniformly in the volume. Together with the thermodynamic limit of the finite-volume BCS functional, this immediately yields the following corollary.
\begin{corollary}[Thermodynamic limit]\label{cor:thermodynamic}
For every fixed \(\mu>1\) and \(g>0\),
\begin{equation}\label{eq:thermodynamic-limit}
 \lim_{L\to\infty}\frac{E_0(L,\mu)}{L^3}
 =
 e_{\rm FS}(\mu)+e_{\rm BCS}(\mu),
 \quad
 e_{\rm FS}(\mu)
 :=2\int_{\{|p|^2 < \mu\}}\left(|p|^2 - \mu \right)\frac{dp}{(2\pi)^3}\,.
\end{equation}
\end{corollary}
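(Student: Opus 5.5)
The plan is to divide the two-sided bound \eqref{eq:main-general} of Theorem~\ref{thm:main} by $L^3$. Since $C_{\mu,g}$ does not depend on $L$, the term $C_{\mu,g}/L^3$ tends to zero. The corollary then follows once we prove the two separate limits
\[
\lim_{L\to\infty}\frac{E_{\rm FS}(L,\mu)}{L^3}=e_{\rm FS}(\mu),\qquad \lim_{L\to\infty}\frac{E^{\rm BCS}_{L,\mu}}{L^3}=e_{\rm BCS}(\mu),
\]
where $e_{\rm BCS}(\mu)$ is given by the scalar formula \eqref{eq:intro-shell-BCS}. Both limits are Riemann-sum statements over the lattice $\Lambda_L^*=\frac{2\pi}{L}\Z^3$, whose cells have volume $(2\pi/L)^3$. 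This accounts for the factor $(2\pi)^{-3}$ in the integrals.

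\emph{Fermi sea.} Write $E_{\rm FS}(L,\mu)/L^3=2\,(2\pi)^{-3}\sum_{p}(2\pi/L)^3 f(p)\chi_{\{|p|^2<\mu\}}(p)$ with $f(p)=|p|^2-\mu$. The function $f$ is Lipschitz on the ball $\{|p|^2\le\mu\}$, so on each lattice cell lying inside the ball, $f(p)$ differs from the cell average of $f$ by $O(1/L)$. The cells meeting the sphere $|p|=\sqrt\mu$ number $O(L^2)$, and $f$ is bounded there. Their total contribution is therefore $O(L^2\cdot L^{-3})=O(1/L)$. Together these give $E_{\rm FS}/L^3=e_{\rm FS}(\mu)+O(1/L)$.

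\emph{BCS part, pointwise.} Set $h(p,\Delta)=|\,|p|^2-\mu\,|-\sqrt{(|p|^2-\mu)^2+|\Delta|^2}$ and define
\[
f(\Delta)=\frac{\sqrt\mu}{g}|\Delta|^2+\int_{\Omega_\mu}h(p,\Delta)\frac{dp}{(2\pi)^3}.
\]
For fixed $\Delta$, the map $p\mapsto h(p,\Delta)$ is bounded by $|\Delta|$ and Lipschitz on the compact shell $\Omega_\mu$, with constant at most $4(\mu+1)^{1/2}$. The boundary of $\Omega_\mu$ consists of two spheres. By the same cell-counting argument as above,
\[
L^{-3}\mathscr E^{\rm BCS}_{L,\mu}(\Delta)=f(\Delta)+O\bigl((1+|\Delta|)/L\bigr).
\]

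\emph{BCS part, passing to the minimum.} This is the only step that needs care, since pointwise convergence alone does not give convergence of minima.
\begin{itemize}
\item First, $|h(p,\Delta)-h(p,\Delta')|\le\bigl||\Delta|-|\Delta'|\bigr|$, and $0\ge h\ge-|\Delta|$.
\item Let $N_L=\#\Omega_{L,\mu}$. Then $N_L/L^3\to|\Omega_\mu|/(2\pi)^3$, so $N_L\le C_\mu L^3$ for $L$ large.
\item Hence $L^{-3}\mathscr E^{\rm BCS}_{L,\mu}(\Delta)\ge \frac{\sqrt\mu}{g}|\Delta|^2-C_\mu|\Delta|$. Since $\mathscr E^{\rm BCS}_{L,\mu}(0)=0$, every minimizer satisfies $|\Delta|\le R:=gC_\mu/\sqrt\mu$, uniformly in $L$. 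The same bound holds for $f$.
\item On the disk $|\Delta|\le R$, the $O((1+|\Delta|)/L)$ error above is uniform, so $L^{-3}\mathscr E^{\rm BCS}_{L,\mu}\to f$ uniformly there.
\end{itemize}
Consequently
\[
\Bigl|\,L^{-3}E^{\rm BCS}_{L,\mu}-\min_{|\Delta|\le R}f\,\Bigr|\le\sup_{|\Delta|\le R}\bigl|L^{-3}\mathscr E^{\rm BCS}_{L,\mu}-f\bigr|\to0,
\]
and $\min_{|\Delta|\le R}f=e_{\rm BCS}(\mu)$ by \eqref{eq:intro-shell-BCS}. Combining this with the Fermi-sea limit and the $O(L^{-3})$ error from Theorem~\ref{thm:main} proves \eqref{eq:thermodynamic-limit}.
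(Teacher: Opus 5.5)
Your proof is correct and follows the paper's route. You divide the two-sided bound of Theorem~\ref{thm:main} by $L^3$, let the $L$-independent error vanish, and use convergence of the finite-volume energy densities. The paper simply invokes Proposition~\ref{prop:scalar-reduction} for $L^{-3}E^{\rm BCS}_{L,\mu}\to e_{\rm BCS}(\mu)$ and leaves the limit $E_{\rm FS}(L,\mu)/L^3\to e_{\rm FS}(\mu)$ implicit; you supply both steps yourself, correctly, via Riemann sums, the $L$-uniform a priori bound $|\Delta|\le R$ on minimizers, and uniform convergence of $L^{-3}\mathscr E^{\rm BCS}_{L,\mu}$ on that disk.
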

We finally consider the high-density regime $\mu \to \infty$. The scaling of the coupling chosen above leads to a nontrivial BCS correction to the ground-state energy density, while the corresponding gap parameter remains of order one. More precisely, we obtain the following asymptotics.
\begin{corollary}[High-density asymptotics]\label{cor:high-density}
Define
\[
 e_0(\mu)=\lim_{L\to\infty}\frac{E_0(L,\mu)}{L^3}
\]
and, for fixed \(g>0\),
\begin{equation}\label{eq:Delta-infty}
 \Delta_\infty=\frac1{\sinh(4\pi^2/g)}.
\end{equation}
Then, as \(\mu\to\infty\),
\begin{equation}
    e_{\rm BCS}(\mu) =\frac{\sqrt\mu}{4\pi^2}
 \left(1-\sqrt{1+\Delta_\infty^2}\right)
 +O(\mu^{-3/2})\;,
\end{equation}
and hence
\begin{equation}
    e_0(\mu) = -\frac{2}{15\pi^2}\mu^{5/2}
 +\frac{\sqrt\mu}{4\pi^2}
 \left(1-\sqrt{1+\Delta_\infty^2}\right)
 +O(\mu^{-3/2})\,.
\end{equation}
Moreover, the minimizer of the BCS functional \ref{eq:BCS-functional} is of the form \eqref{eq:minimizers-form}, with gap parameter $\Delta_{\mu}$ satisfying
\begin{equation}\label{eq:gap-high-density}
    \Delta_{\mu} = \Delta_{\infty} + O(\mu^{-2})\,.
\end{equation}
\end{corollary}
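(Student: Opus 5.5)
The plan is to reduce everything to the scalar problem \eqref{eq:intro-shell-BCS} and expand in $1/\mu$ after a radial change of variables. The key input is that the shell $\Omega_\mu$ is symmetric in $t=|p|^2-\mu\in[-1,1]$, so odd terms drop out and the first nontrivial correction is of relative order $\mu^{-2}$. The asymptotics of $e_0(\mu)$ then follow from Corollary~\ref{cor:thermodynamic}, once $e_{\rm FS}(\mu)$ is computed explicitly.

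\textbf{Step 1 (reduction to a one-dimensional integral).} In spherical coordinates with $|p|^2=\mu+t$ we have $dp=2\pi\sqrt{\mu+t}\,dt\,d\omega/(4\pi)$. Hence for any $f$ on $[-1,1]$,
\[
\int_{\Omega_\mu} f(|p|^2-\mu)\,\frac{dp}{(2\pi)^3}=\frac{\sqrt\mu}{4\pi^2}\int_{-1}^1\sqrt{1+t/\mu}\,f(t)\,dt .
\]
Since $\sqrt{1+t/\mu}=1+\frac{t}{2\mu}+O(\mu^{-2})$ uniformly on $[-1,1]$, the linear term integrates to zero against the even function $f(t)=|t|-\sqrt{t^2+|\Delta|^2}$. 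Moreover $|f|\le|\Delta|$. Dividing the functional in \eqref{eq:intro-shell-BCS} by $\sqrt\mu$, it therefore equals
\[
\varphi(D):=\frac{D^2}{g}+\frac{1}{4\pi^2}\Bigl(1-\sqrt{1+D^2}-D^2\operatorname{arsinh}(1/D)\Bigr),\qquad D=|\Delta|,
\]
up to an error $O(\mu^{-2}|\Delta|)$. Here I used $2\int_0^1\bigl(t-\sqrt{t^2+D^2}\bigr)\,dt=1-\sqrt{1+D^2}-D^2\operatorname{arsinh}(1/D)$.

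\textbf{Step 2 (minimizing the limit and controlling the error).} Set $s=D^2$. A direct differentiation gives
\[
\frac{d\varphi}{ds}=\frac1g-\frac1{4\pi^2}\operatorname{arsinh}\bigl(1/\sqrt s\bigr),
\]
which is strictly increasing in $s$. So $\varphi$ is strictly convex in $s$, and its unique minimizer is determined by $\operatorname{arsinh}(1/D)=4\pi^2/g$, that is, $D=\Delta_\infty$ as in \eqref{eq:Delta-infty}. At this point the terms $D^2/g$ and $-D^2\operatorname{arsinh}(1/D)/(4\pi^2)$ cancel, so $\min\varphi=\frac{1}{4\pi^2}\bigl(1-\sqrt{1+\Delta_\infty^2}\bigr)$.

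The exact $\mu$-dependent functional (divided by $\sqrt\mu$) is bounded below by $D^2/g-D/(2\pi^2)$, uniformly in $\mu$. Hence all minimizers lie in a fixed compact set $D\le C_g$. On that set the error from Step 1 is uniformly $O(\mu^{-2})$, which gives $e_{\rm BCS}(\mu)=\sqrt\mu\min\varphi+O(\mu^{-3/2})$.

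\textbf{Step 3 (the gap).} The same convexity argument, applied to the exact integral, shows that the minimizing $|\Delta_\mu|$ is the unique positive root of
\[
F(D,\mu):=1-\frac{g}{8\pi^2}\int_{-1}^1\frac{\sqrt{1+t/\mu}}{\sqrt{t^2+D^2}}\,dt=0 .
\]
This is the self-consistency equation in \eqref{eq:minimizers-form} written in the same variables. By evenness, $F(D,\mu)=F(D,\infty)+O(\mu^{-2})$ locally uniformly in $D>0$, and the same holds for $\partial_D F$. Moreover $\partial_D F(\Delta_\infty,\infty)>0$. The implicit function theorem (or simply the monotonicity of $F$ in $D$ together with a mean value bound) then yields $\Delta_\mu=\Delta_\infty+O(\mu^{-2})$, up to a constant phase. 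Finally,
\[
e_{\rm FS}(\mu)=\frac{1}{\pi^2}\int_0^{\sqrt\mu}(r^2-\mu)\,r^2\,dr=-\frac{2}{15\pi^2}\mu^{5/2}.
\]
Adding this to Step 2 and invoking Corollary~\ref{cor:thermodynamic} gives the expansion of $e_0(\mu)$.

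\textbf{Main obstacle.} The main thing to check is uniformity: the $O(\mu^{-2})$ remainder must hold uniformly over the relevant range of $\Delta$. This requires the a priori compactness bound of Step 2, and, for the gap, a bound on the $\mu$-correction of $\partial_D F$ near $\Delta_\infty$. Both are elementary because $t\in[-1,1]$ is bounded and $D$ stays away from $0$ near $\Delta_\infty$.
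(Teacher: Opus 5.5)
Your proposal is correct and follows essentially the paper's route. Passing to $t=|p|^2-\mu$, the symmetry of the shell kills the $O(\mu^{-1})$ term and gives $\operatorname{arcsinh}(1/\Delta_\mu)=4\pi^2/g+O(\mu^{-2})$; the paper obtains this from the explicit bound $2-x^2\le\sqrt{1+x}+\sqrt{1-x}\le 2$ rather than a Taylor remainder. Your comparison of the minima of the two scalar functionals, which are uniformly $O(\mu^{-2})$-close after division by $\sqrt\mu$ on the a priori bounded range of $D$, makes explicit the step the paper only summarizes as ``computing $e_{\rm BCS}(\mu)$ explicitly''.
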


\subsection{Previous works}

The BCS functional itself is by now a well-developed effective theory. In particular,
Ginzburg--Landau theory was derived rigorously from the BCS functional close to the
critical temperature in \cite{FrankHainzlSeiringerSolovej2012}, with subsequent extensions
to magnetic and more general external fields in
\cite{DeuchertHainzlMaier2023Homogeneous,DeuchertHainzlMaier2023General}. A broader and
more difficult objective is to understand how BCS theory emerges from a less reduced
microscopic many-fermion problem. Renormalization-group approaches identify the Cooper
channel as the distinguished infrared instability of weakly interacting fermions near the
Fermi surface; see, for example,
\cite{FeldmanTrubowitz1990,FeldmanTrubowitz1991,FeldmanMagnenRivasseauTrubowitz1992,
FeldmanMagnenRivasseauTrubowitz1993,ChenFroehlichSeifert1996}. We will not pursue this
microscopic direction here.

For the reduced BCS Hamiltonian itself, the relation to the Bogoliubov--BCS description has a long history. Particularly relevant to the present work is the method of the approximating Hamiltonian, originating in the early work of Bogolyubov and subsequently developed in a much broader framework; see, for example, the survey \cite{BogolRev} or the book \cite{BogolyubovJr1972}. For an attractive separable interaction, the collective pair field is replaced by a complex parameter, leading to a quadratic Hamiltonian together with a residual fluctuation term measuring the deviation of the collective field from this parameter. This provides the basic algebraic mechanism underlying the BCS mean-field description.

A mathematically rigorous analysis of the thermodynamic limit was obtained later by several methods. In particular, Duffield and Pul'e used large-deviation techniques together with Berezin--Lieb inequalities to derive the limiting free-energy variational principle for BCS-type quasi-spin models \cite{DuffieldPule1987,DuffieldPule1988}. Related rigorous mean-field variational principles were developed in \cite{CeglaLewisRaggio1988,PetzRaggioVerbeure1989,RaggioWerner1989BCS,RaggioWerner1991}.

A complementary infinite-volume formulation was developed by Haag \cite{Haag1962}. In an irreducible representation of the observable algebra, the spatially averaged pair field becomes a c-number, the quartic pairing interaction is represented by a quadratic Bogoliubov Hamiltonian, and the gap equation arises as a self-consistency condition. Thirring and Wehrl \cite{ThirringWehrl1967} subsequently gave a precise infinite-tensor-product formulation of this mechanism and studied the convergence of the dynamics of local observables to the Bogoliubov dynamics. Thirring \cite{Thirring1968} further analyzed thermal Green functions for the degenerate quasi-spin model.

Other works address complementary aspects of the reduced model. Richardson and Richardson--Sherman \cite{Richardson1963,RichardsonSherman1964} constructed exact finite-volume eigenstates, while spectral properties of separable reduced Hamiltonians were studied in \cite{Kato1965,Kato1967}. Mattis and Lieb \cite{MattisLieb1961} obtained explicit exact wave functions and analyzed their large-volume relation to the BCS variational equations, including finite-size corrections. Bursill and Thompson \cite{BursillThompson1993} studied the validity of the BCS variational description for classes of attractive reduced pairing interactions.

The classical literature therefore contains both the physical approximating-Hamiltonian mechanism and rigorous thermodynamic-limit results for reduced BCS-type models. Our purpose here is different. We work directly with the finite-volume fermionic Hamiltonian and obtain an explicit, volume-independent bound on the difference between its exact ground-state energy and the finite-volume BCS variational energy. In this sense, the result provides a quantitative finite-volume version of the thermodynamic exactness of the BCS approximation for the shell model considered here.

\section{Elementary properties of BCS functionals}
\label{sec:functionals}
In this section, we collect the basic properties of the finite- and infinite-volume BCS functionals that will be used throughout the paper. Although at zero temperature the BCS variational problem can be reduced to a functional of the pairing function $\alpha$ alone, it is convenient for our purposes to retain both the one-particle density $\gamma$ and the pairing density $\alpha$. This formulation arises naturally in the BCS product-state upper bound of Section \ref{sec:upper}.\\
\textbf{Finite volume.}
For each \(p\in\Omega_{L,\mu}\), let \(\gamma_p\in[0,1]\) and \(\alpha_p\in\C\) satisfy
\[
 |\alpha_p|^2 = \gamma_p(1-\gamma_p)\,.
\]
After subtracting the energy of the free Fermi sea, the finite-volume BCS functional is given by
\begin{equation}\label{eq:finite-gamma-alpha}
 \mathscr F^{BCS}_{L,\mu}[\gamma,\alpha]
 :=
 2\sum_{p\in\Omega_{L,\mu}}
 \left(|p|^2 - \mu \right)\bigl(\gamma_p-\1_{\{|p|^2 < \mu\}}\bigr)
 -\frac{g}{\sqrt\mu L^3}
 \left|\sum_{p\in\Omega_{L,\mu}}\alpha_p\right|^2.
\end{equation}

As explained in the introduction, the separable structure of the interaction allows this variational problem to be reduced to a scalar one. The corresponding finite-volume scalar functional was defined in \eqref{eq:BCS-scalar-finite-vol}
\begin{equation*}
    \mathscr{E}_{L,\mu}^{\rm BCS}(\Delta):= \frac{\sqrt{\mu}L^3}{g}|\Delta|^2 + \sum_{p \in \Omega_{L, \mu}}\left(||p|^2 - \mu| - \sqrt{||p|^2 - \mu|^2 + |\Delta|^2} \right)\,.
\end{equation*}

\textbf{Infinite volume.}
In infinite volume, we consider measurable functions $(\gamma,\alpha):\Omega_\mu\to[0,1]\times\C$ satisfying $|\alpha|^2=\gamma(1-\gamma)$ almost everywhere. Since the interaction is restricted to the shell $\Omega_\mu$, which has finite measure, all terms in the following functional are finite:
\begin{align}
 \mathscr F^{\rm BCS}_{\mu}[\gamma,\alpha]
 :={}&
 2\int_{\Omega_\mu}
 (|p|^2-\mu)
 \bigl(\gamma(p)-\1_{\{|p|^2<\mu\}}\bigr)\,\dpm
 \nonumber\\
 &-\frac{g}{\sqrt\mu}
 \left|\int_{\Omega_\mu}\alpha(p)\,\dpm\right|^2
 \label{eq:infinite-gamma-alpha}
\end{align}
The corresponding scalar functional, obtained from the same reduction as in finite volume, is
\begin{equation}
  \mathscr E_{\mu}^{\rm BCS}(\Delta) := \frac{\sqrt\mu}{g}|\Delta|^2
 +\int_{\Omega_\mu}
 \left(|\,|p|^2-\mu\,|-\sqrt{(|p|^2-\mu)^2+|\Delta|^2}\right)
 \frac{dp}{(2\pi)^3}\,.
\end{equation}

The relation between the $(\gamma,\alpha)$ formulation and the scalar formulation, together with the relevant thermodynamic convergence, is summarized in the following proposition.
\begin{proposition}[Scalar reduction]
\label{prop:scalar-reduction}
There exist numbers
\[
 \Delta_{L,\mu}\geq0,
 \qquad
 \Delta_\mu>0,
\]
such that
\begin{align}
 E^{\rm BCS}_{L,\mu}
 &:=
 \min_{(\gamma,\alpha)}
 \mathscr F^{\rm BCS}_{L,\mu}[\gamma,\alpha]
 =
 \min_{\Delta} \mathscr E^{\rm BCS}_{L,\mu}(\Delta) = \mathscr E^{\rm BCS}_{L,\mu}(\Delta_{L, \mu}),
 \label{eq:scalar-reduction-finite}
 \\
 e_{\rm BCS}(\mu)
 &:=
 \min_{(\gamma,\alpha)}
 \mathscr F^{\rm BCS}_{\mu}[\gamma,\alpha]
 =
 \min_{\Delta} \mathscr E^{\rm BCS}_{\mu}(\Delta) = \mathscr E^{\rm BCS}_{\mu}(\Delta_{\mu})
 \label{eq:scalar-reduction-infinite}
\end{align}
Moreover, for all sufficiently large $L$, the finite-volume gap parameter satisfies $\Delta_{L,\mu}>0$, and every minimizer of the finite-volume BCS functional is, up to a global phase, of the form 
\begin{equation}\label{eq:form-finite-minimizers}
    \gamma_p = \frac{1}{2}\left(
 1-\frac{|p|^2 - \mu}
 {\sqrt{||p|^2 - \mu|^2+\Delta_{L,\mu}^2}}
 \right),
 \qquad
 \alpha_p
 =
 \frac{e^{i\theta}\Delta_{L,\mu}}
 {2\sqrt{||p|^2 - \mu|^2+\Delta_{L,\mu}^2}},
\end{equation}
for some $\theta \in [0,2\pi)$. The infinite-volume minimizers have the analogous form with $\Delta_{L,\mu}$ replaced by $\Delta_\mu$.\\
The finite- and infinite-volume gap parameters are characterized, respectively, by the following gap equations:
\begin{align}
  1 &=
 \frac{g}{2\sqrt\mu\,L^3}
 \sum_{p\in\Omega_{L,\mu}}
 \frac1{\sqrt{||p|^2 - \mu|^2+\Delta_{L,\mu}^2}}\,, \\
  1 &=
 \frac{g}{2\sqrt\mu}
 \int_{\Omega_\mu}
 \frac1{\sqrt{||p|^2-\mu|^2+\Delta_\mu^2}}\frac{dp}{(2\pi)^3}\,.
\end{align}
Finally, the finite-volume minimum and the corresponding gap parameter converge to their infinite-volume counterparts as $L\to\infty$:
\begin{equation}
     \lim_{L \to \infty}\frac{E^{\rm BCS}_{L, \mu}}{L^3} = e_{\rm BCS}(\mu)\,, \qquad \lim_{L \to \infty}\Delta_{L, \mu} = \Delta_{\mu}\,.
\end{equation}
\begin{proof}
  This is a standard computation, see e.g. \cite{martin2013many}. For the sake of completeness, we sketch the proof of \eqref{eq:scalar-reduction-finite}.\\
  By a simple completion of the square, we can write the second term in $\mathscr{F}^{\rm BCS}_{L, \mu}[\gamma, \alpha]$ as
\begin{align*}
-\frac{g}{\sqrt\mu L^3}
\left|\sum_{p\in\Omega_{L,\mu}}\alpha_p\right|^2
=
\min_{\Delta\in\C}
\left\{
\frac{\sqrt\mu L^3}{g}|\Delta|^2
-2\Re\left(
\overline{\Delta}
\sum_{p\in\Omega_{L,\mu}}\alpha_p
\right)
\right\}.
\end{align*}
Since the completion-of-the-square identity holds pointwise in \((\gamma,\alpha)\), and the finite-volume variational problem is finite-dimensional, we may interchange the two minimizations.
 Thus, for every
$p\in\Omega_{L,\mu}$, we want to find
\[
\min_{\substack{0\leq\gamma_p\leq1\\
|\alpha_p|^2=\gamma_p(1-\gamma_p)}}
\left\{
2(|p|^2-\mu)
\left(\gamma_p-\1_{\{|p|^2<\mu\}}\right)
-2\Re(\overline{\Delta}\alpha_p)
\right\}.
\]
By Cauchy--Schwarz on $\C^2$, together with the constraint on $(\alpha, \gamma)$, we find
\begin{align*}
&(|p|^2-\mu)(2\gamma_p-1)
-2\Re(\overline{\Delta}\alpha_p)
\\
&\qquad\geq
-\sqrt{(|p|^2-\mu)^2+|\Delta|^2} \underbrace{\sqrt{(2\gamma_p -1)^2 + 4|\alpha_p|^2}}_{= 1}.
\end{align*}
Using
\[
(|p|^2-\mu)
-2(|p|^2-\mu)\1_{\{|p|^2<\mu\}}
=
\bigl||p|^2-\mu\bigr|,
\]
we obtain
\begin{align*}
&\min_{\substack{0\leq\gamma_p\leq1\\
|\alpha_p|^2=\gamma_p(1-\gamma_p)}}
\left\{
2(|p|^2-\mu)
\left(\gamma_p-\1_{\{|p|^2<\mu\}}\right)
-2\Re(\overline{\Delta}\alpha_p)
\right\}
\\
&\qquad=
\bigl||p|^2-\mu\bigr|
-\sqrt{(|p|^2-\mu)^2+|\Delta|^2}.
\end{align*}
Summing over $p$ proves
\[
\min_{(\gamma,\alpha)}
\mathscr F^{\rm BCS}_{L,\mu}[\gamma,\alpha]
=
\min_{\Delta\in\C}
\mathscr E^{\rm BCS}_{L,\mu}(\Delta).
\]
The form of the minimizers in \eqref{eq:form-finite-minimizers} comes from the case of equality in Cauchy-Schwarz, while the existence of $\Delta_{L, \mu} \geq 0$ follows from convexity and coercivity of $\mathscr{E}^{\rm BCS}_{L, \mu}$, when considered as a function of $|\Delta|^2$.
\end{proof}
\end{proposition}

\section{Upper bound via BCS product states}
\label{sec:upper}
In this section, we prove the upper bound in Theorem \ref{thm:main} by evaluating the Hamiltonian in the standard BCS product state. This trial state goes back to the original work of Bardeen, Cooper, and Schrieffer \cite{BCS1957}; we recall the short argument for completeness.
Let $\Omega$ denote the Fock vacuum. We first fill all one-particle states below the Fermi surface that lie outside the interacting shell and define
\begin{equation}\label{eq:FS-out}
 \Psi_{\rm FS}^{\rm out}
 :=
 \prod_{\substack{p\notin\Omega_{L,\mu}\atop |p|^2 < \mu}}
 b_p^*\,\Omega.
\end{equation}
Inside the interacting shell, we use the usual BCS superposition of an empty and an occupied pair state. For each $p \in \Omega_{L, \mu}$, choose complex coefficients satisfying $(|u_p|^2+|v_p|^2=1)$ and set
\begin{equation}\label{eq:BCS-product}
 \Psi[u,v]
 :=
 \prod_{p\in\Omega_{L,\mu}}
 (u_p+v_pb_p^*)\,\Psi_{\rm FS}^{\rm out}.
\end{equation}
The canonical anticommutation relations imply that $\Psi[u,v]$ is normalized. Its energy expectation can be expressed exactly in terms of the corresponding one-particle and pairing densities. This yields the desired upper bound, as summarized in the following lemma.
\begin{lemma}[Product-state energy]
\label{lem:product-energy}
Given a product state as in \eqref{eq:BCS-product}, let
\begin{equation}\label{eq:product-gamma-alpha}
 \gamma_p :=|v_p|^2,\qquad
 \alpha_p :=\overline{u_p}\,v_p.
\end{equation}
Then
\begin{equation}\label{eq:exact-product-functional}
 \langle\Psi[u,v],H_{L,\mu}\Psi[u,v]\rangle
 =
 E_{\rm FS}(L,\mu)
 +\mathscr F^{\rm BCS}_{L,\mu}[\gamma,\alpha]
 -\frac{g}{\sqrt{\mu}L^3}\sum_{p\in\Omega_{L,\mu}}\gamma_p^2.
\end{equation}
Consequently,
\begin{equation}\label{eq:upper-bound}
 E_0(L,\mu) \le E_{\rm FS}(L,\mu) + E^{\rm BCS}_{L,\mu}.
\end{equation}

\begin{proof}
Using the product structure of $\Psi[u,v]$ together with the canonical anticommutation relations, one obtains, for $p,q\in\Omega_{L,\mu}$,
\begin{align*}
 &\langle \Psi[u,v], a^*_{p,\uparrow}a_{p,\uparrow}\, \Psi[u,v]\rangle
 =\langle \Psi[u,v], a^*_{-p,\downarrow}a_{-p,\downarrow}\, \Psi[u,v]\rangle =\gamma_p, \nonumber\\
 &\langle \Psi[u,v], b_p\, \Psi[u,v]\rangle = \alpha_p, \qquad \langle \Psi[u,v], b_p^*b_q \Psi[u,v]\rangle
 =\overline{\alpha_p}\alpha_q
 +\delta_{pq}\gamma_p^2.
\end{align*}
Substituting these identities into the Hamiltonian gives
\begin{align*}
    \langle \Psi[u,v], H_{L, \mu}\, \Psi[u,v]\rangle =&\, 2\sum_{p \notin \Omega_{L, \mu}\atop |p|^2 < \mu}(|p|^2 - \mu) + 2\sum_{p \in \Omega_{L, \mu}}(|p|^2 - \mu)\gamma_p\, + \\
    & -\frac{g}{\sqrt{\mu}L^3}\sum_{p,q \in \Omega_{L, \mu}}\overline{\alpha}_p\,\alpha_q - \frac{g}{\sqrt{\mu}L^3}\sum_{p \in \Omega_{L, \mu}}\gamma_p^2\;,
\end{align*}
To identify the BCS functional, we add and subtract the free Fermi-sea contribution from the interacting shell,
\begin{equation*}
    2\sum_{p \in \Omega_{L, \mu} \atop |p|^2 < \mu}(|p|^2 - \mu)
\end{equation*}
which gives \eqref{eq:exact-product-functional}. The upper bound \eqref{eq:upper-bound} then follows by minimizing over the admissible coefficients $u_p,v_p$ and using the variational principle for the many-body ground-state energy.
\end{proof}
\end{lemma}

\section{Approximating Hamiltonian and localization}
\label{sec:approximating}
In this section, we begin the proof of the lower bound in Theorem \ref{thm:main}. The argument combines the method of the approximating Hamiltonian \cite{BogolRev} with an IMS-type localization of the collective pair field.

The natural collective variable is the rescaled pair field associated with the interaction.

Let
\begin{equation}\label{eq:Z-def}
 Z=\frac{g}{\sqrt{\mu}L^3}\sum_{p \in \Omega_{L, \mu}}b_p\,.
\end{equation}
For every \(\omega\in\C\), the interaction can be decomposed exactly by completing the square around the complex parameter $\omega$. 
\begin{equation}\label{eq:square-completion}
 H_{L,\mu}
 =
 H_{\rm out}+H_{\rm app}(\omega)
 -\frac{\sqrt{\mu}L^3}{g}
 (Z-\omega)^*(Z-\omega),
\end{equation}
where
\begin{equation}\label{eq:H-out}
 H_{\rm out}
 =
 \sum_{p\notin\Omega_{L,\mu}}
 \left(|p|^2 - \mu \right)\bigl(n_{p,\uparrow}+n_{-p,\downarrow}\bigr)
\end{equation}
and
\begin{align}
 H_{\rm app}(\omega)
 &=
 \frac{\sqrt{\mu}L^3}{g}|\omega|^2
 +\sum_{p\in\Omega_{L,\mu}}h_p(\omega),
 \label{eq:H-app}\\
 h_p(\omega)
 &=
 \left(|p|^2 - \mu \right)\bigl(n_{p,\uparrow}+n_{-p,\downarrow}\bigr)
 -\overline\omega b_p-\omega b_p^*.
 \label{eq:h-p}
\end{align}
The advantage of the approximating Hamiltonian $H_{\rm app}(\omega)$ is that it decouples into independent momentum sectors. More precisely, for each $p\in\Omega_{L,\mu}$, the operator $h_p(\omega)$ acts on the four-dimensional local Fock space spanned by
\begin{equation*}
    |\Omega\rangle, \quad a^*_{p, \uparrow}|\Omega\rangle, \quad a^*_{-p, \downarrow}|\Omega\rangle, \quad b^*_p|\Omega\rangle = a^*_{p, \uparrow} a^*_{-p, \downarrow}|\Omega\rangle\;.
\end{equation*} 
In this ordered basis, the restriction of $h_p(\omega)$ is represented by the matrix
\begin{equation}\label{eq:local-matrix}
 h_p(\omega)
 =
 \begin{pmatrix}
 0&0&0&-\overline\omega\\
 0&\left(|p|^2 - \mu \right)&0&0\\
 0&0&\left(|p|^2 - \mu \right)&0\\
 -\omega&0&0&2\left(|p|^2 - \mu \right)
 \end{pmatrix}.
\end{equation}
with lowest eigenvalue given by 
\begin{equation*}
    |p|^2 - \mu\, - \sqrt{\left| |p|^2 - \mu \right|^2 + |\omega|^2}\;. 
\end{equation*}
The lowest eigenvalue of each local block can therefore be computed explicitly. Summing over the momentum sectors yields the following operator bound.
\begin{lemma}[Approximating-Hamiltonian lower bound]
\label{lem:approximating-bound}
For every \(\omega\in\C\), as an operator inequality on the form domain,
\begin{equation}\label{eq:app-operator-bound}
 H_{\rm out}+H_{\rm app}(\omega)
 \ge
 E_{\rm FS}(L,\mu) + \mathscr E^{\rm BCS}_{L,\mu}(\omega).
\end{equation}
Combining this estimate with the square completion gives, for every vector $\varphi$ in the form domain,
\begin{align}
 \langle\varphi,H_{L,\mu}\varphi\rangle
 \ge{}&
 \bigl(E_{\rm FS}(L,\mu) + E^{\rm BCS}_{L,\mu}\bigr)
 \|\varphi\|^2 +
 \nonumber\\
 &-\frac{\sqrt{\mu}L^3}{g}
 \|(Z-\omega)\varphi\|^2.
 \label{eq:app-form-bound}
\end{align}
\end{lemma}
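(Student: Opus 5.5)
The plan is to exploit the tensor-product structure of the Fock space across momentum sectors. First decompose $\mathcal F_f(L^2(\Lambda_L;\C^2))$ into the modes outside $\Omega_{L,\mu}$ and, for each $p\in\Omega_{L,\mu}$, the four-dimensional local space generated by $a^*_{p,\uparrow}$ and $a^*_{-p,\downarrow}$. Here we assume that $\Omega_\mu$ is symmetric under $p\mapsto -p$, so each pair $(p,\uparrow),(-p,\downarrow)$ is used exactly once. Every operator $n_{p,\uparrow}+n_{-p,\downarrow}$, $b_p$ and $b_p^*$ is even in the fermionic operators. Hence all terms of $H_{\rm out}$ and of the $h_p(\omega)$ mutually commute, and each acts only on its own sector; the Jordan--Wigner signs cancel for even operators. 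Therefore $H_{\rm out}+H_{\rm app}(\omega)$ is a sum of commuting, independent pieces. Its infimum equals the sum of the infima of the pieces plus the constant $\frac{\sqrt\mu L^3}{g}|\omega|^2$.

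Next compute each infimum. For $H_{\rm out}$, each mode $(p,\sigma)$ with $p\notin\Omega_{L,\mu}$ contributes $\min\{0,|p|^2-\mu\}$, so
\[
H_{\rm out}\ge 2\sum_{p\notin\Omega_{L,\mu},\,|p|^2<\mu}(|p|^2-\mu).
\]
For $h_p(\omega)$ we use the matrix \eqref{eq:local-matrix}. The singly occupied states give the eigenvalue $|p|^2-\mu$. The $2\times2$ block on $\{|\Omega\rangle,b_p^*|\Omega\rangle\}$ has eigenvalues $(|p|^2-\mu)\pm\sqrt{(|p|^2-\mu)^2+|\omega|^2}$. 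The minus branch is smaller than $|p|^2-\mu$, so
\[
h_p(\omega)\ge (|p|^2-\mu)-\sqrt{(|p|^2-\mu)^2+|\omega|^2}.
\]
Rewrite this as $2(|p|^2-\mu)\1_{\{|p|^2<\mu\}}+\bigl||p|^2-\mu\bigr|-\sqrt{(|p|^2-\mu)^2+|\omega|^2}$, using the identity already used in Proposition \ref{prop:scalar-reduction}. Summing over $p\in\Omega_{L,\mu}$ and adding the outside contribution reproduces exactly $E_{\rm FS}(L,\mu)$ plus the $p$-sum in $\mathscr E^{\rm BCS}_{L,\mu}(\omega)$. Together with the constant $\frac{\sqrt\mu L^3}{g}|\omega|^2$, this gives \eqref{eq:app-operator-bound}. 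Since $H_{\rm out}$ is diagonal in the occupation-number basis, with unbounded but semibounded entries, the inequality holds in the form sense.

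For \eqref{eq:app-form-bound}, take the expectation of the exact identity \eqref{eq:square-completion} in $\varphi$. Apply \eqref{eq:app-operator-bound} and then $\mathscr E^{\rm BCS}_{L,\mu}(\omega)\ge E^{\rm BCS}_{L,\mu}$, which holds by Proposition \ref{prop:scalar-reduction}. We also need to check \eqref{eq:square-completion} itself. Expanding $(Z-\omega)^*(Z-\omega)$ gives $Z^*Z-\bar\omega Z-\omega Z^*+|\omega|^2$. Multiplied by $\frac{\sqrt\mu L^3}{g}$, the term $Z^*Z$ yields exactly $\frac{g}{\sqrt\mu L^3}\sum_{p,q}b_p^*b_q$, and the cross terms yield $\sum_p(\bar\omega b_p+\omega b_p^*)$. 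These match $H_{\rm app}(\omega)$ and the interaction term.

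No step is a real obstacle. The only point needing care is the tensor-factorization and commutativity of the local blocks in the fermionic setting, which holds because every local term is even in the fermionic operators. The rest is the explicit $2\times2$ diagonalization and bookkeeping of the Fermi-sea constant.
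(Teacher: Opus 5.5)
Your proposal is correct and follows the paper's proof: you bound $H_{\rm out}$ by filling the modes below the Fermi surface, and you diagonalize each local block to get the lowest eigenvalue $(|p|^2-\mu)-\sqrt{(|p|^2-\mu)^2+|\omega|^2}$. You then regroup via $2(|p|^2-\mu)\1_{\{|p|^2<\mu\}}+\bigl||p|^2-\mu\bigr|$ into $E_{\rm FS}(L,\mu)+\mathscr E^{\rm BCS}_{L,\mu}(\omega)$ and combine with the square completion and $\mathscr E^{\rm BCS}_{L,\mu}(\omega)\ge E^{\rm BCS}_{L,\mu}$. Your extra checks fill in details the paper leaves implicit: the parity/tensor-factorization argument (for the lower bound you only need each piece bounded below by its lowest eigenvalue on the full Fock space, not commutativity) and the verification of the square-completion identity; also, the symmetry of $\Omega_\mu$ under $p\mapsto -p$ is automatic rather than an assumption, since $\Omega_\mu$ is radial.
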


\begin{proof}
Outside the interacting shell, the energy is minimized by filling all modes below the Fermi surface, and hence
\begin{equation}\label{eq:H-out-lower}
 H_{\rm out}
 \ge
 2\sum_{\substack{p\notin\Omega_{L,\mu}\\ |p|^2 < \mu}}\left(|p|^2 - \mu \right).
\end{equation}
Inside the interacting shell, the explicit diagonalization of the local blocks gives
\[
 H_{\rm app}(\omega)
 \ge
 \frac{\sqrt{\mu}L^3}{g}|\omega|^2
 +\sum_{p\in\Omega_{L,\mu}}
 \left(\left(|p|^2 - \mu \right)-\sqrt{\left(|p|^2 - \mu \right)^2+|\omega|^2}\right).
\]
Summing the two above inequalities, and using that for each $p \in \Omega_{L, \mu}$ one can write
\[
 (|p|^2 - \mu)-\sqrt{||p|^2 - \mu|^2+|\omega|^2}
 =
 2(|p|^2 - \mu)\1_{\{|p|^2<\mu\}}
 +||p|^2 - \mu|-\sqrt{||p|^2 - \mu|^2+|\omega|^2}\;,
\]
we find precisely \eqref{eq:app-operator-bound}, i.e.
\begin{equation*}
    H_{\rm out} + H_{\rm app}(\omega) \geq E_{\rm FS}(L, \mu) + \mathscr{E}_{L,\mu}^{\rm BCS}(\omega)\,.
\end{equation*}
The form bound \eqref{eq:app-form-bound} then follows from the square-completion identity \eqref{eq:square-completion} and the fact that the scalar BCS functional is bounded below by its minimum.
\end{proof}
The only remaining difficulty is the negative fluctuation term produced by the square completion. To control it uniformly in the volume, we localize the collective pair field $Z$ near suitable complex values $\omega$.\\
\textbf{Heuristics:} Since pair operators with different momenta commute, the commutator $[Z,Z^*]$ is of order $L^{-3}$. Hence the real and imaginary parts of $Z$ become asymptotically commuting as $L\to\infty$, which suggests that they can be localized simultaneously. If $Z$ is localized within a distance $\ell$ of a complex number $\omega$, the negative fluctuation term produces an error of order $L^3\ell^2$. On the other hand, the IMS localization itself produces an error of order $(L^3\ell^2)^{-1}$. Balancing the two contributions leads to the scale $\ell\sim L^{-3/2}$ and therefore to an error of order one. The remainder of this section makes this argument rigorous.

Since $Z$ is not self-adjoint, we localize its real and imaginary parts separately. We therefore write
\begin{equation}\label{eq:XY}
 X=\frac{Z+Z^*}{2},\qquad
 Y=\frac{Z-Z^*}{2i},\qquad Z=X+iY.
\end{equation}
 The required localization estimates depend only on the commutators of these two self-adjoint operators with each other and with the Hamiltonian. The necessary bounds are collected in the following lemma; their proof is deferred to the appendix \ref{app:commutators}.

\begin{lemma}[Commutator bounds]\label{lem:commutators}
There is a universal constant \(C<\infty\) such that
\begin{align}
 \|[X,Y]\|
 &\le C\left(\frac{g}{\sqrt{\mu}L^3} \right)^2|\Omega_{L, \mu}|,
 \label{eq:XY-commutator}\\
 \|[X,[H_{L,\mu},X]]\|
 +\|[Y,[H_{L,\mu},Y]]\|
 &\le C\left(\left(\frac{g}{\sqrt{\mu}L^3} \right)^2|\Omega_{L, \mu}|+\left(\frac{g}{\sqrt{\mu}L^3} \right)^3|\Omega_{L, \mu}|^2\right).
 \label{eq:double-commutator}
\end{align}
\end{lemma}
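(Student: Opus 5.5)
Set $\lambda:=\frac{g}{\sqrt\mu L^3}$, so that $Z=\lambda\sum_{p\in\Omega_{L,\mu}}b_p$. The whole argument rests on the commutation relations of the pair operators. For $p\neq q$ in $\Omega_{L,\mu}$ we have $[b_p,b_q^*]=0$; this uses $p,-p$, $q,-q$ and the CAR with even (quadratic) operators. For equal momenta, $[b_p,b_p^*]=1-n_{p,\uparrow}-n_{-p,\downarrow}=:1-N_p$. Hence
\[
[Z,Z^*]=\lambda^2\sum_{p\in\Omega_{L,\mu}}(1-N_p),\qquad \|1-N_p\|\le1 .
\]
Since $[Z,Z]=0$, we get $[X,Y]=\frac{1}{4i}\bigl([Z+Z^*,Z-Z^*]\bigr)=\frac{i}{2}[Z,Z^*]$. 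This gives \eqref{eq:XY-commutator} with $C=1/2$.

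For the double commutators, split $H_{L,\mu}=T-\frac{\sqrt\mu L^3}{g}\lambda^2 B^*B$ with $B=\sum_{p\in\Omega_{L,\mu}} b_p$, so that $Z=\lambda B$ and the interaction equals $-\lambda^{-1}Z^*Z$ (indeed $\frac{g}{\sqrt\mu L^3}B^*B=\lambda B^*B=\lambda^{-1}Z^*Z$).

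\emph{Kinetic part.} Write $\varepsilon_p=|p|^2-\mu$. Since $[N_p,b_p]=-2b_p$, we have $[T,b_p]=-2\varepsilon_p b_p$ and $[T,b_p^*]=2\varepsilon_p b_p^*$. Thus $[T,Z]=-2\lambda\sum_p\varepsilon_p b_p$. Taking a further commutator with $Z$ or $Z^*$ yields either $0$ or $-2\lambda^2\sum_p\varepsilon_p(1-N_p)$. The latter has norm at most $2\lambda^2\sum_p|\varepsilon_p|\le2\lambda^2|\Omega_{L,\mu}|$, because $|\varepsilon_p|\le1$ on the shell. Expanding $X,Y$ in terms of $Z,Z^*$ gives a bound $C\lambda^2|\Omega_{L,\mu}|$ for both $\|[X,[T,X]]\|$ and $\|[Y,[T,Y]]\|$. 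The shell restriction is exactly what makes $\varepsilon_p$ bounded here.

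\emph{Interaction part.} We need $[X,[Z^*Z,X]]$, and similarly with $Y$. Use the Leibniz rule, $[Z^*Z,X]=Z^*[Z,X]+[Z^*,X]Z$. Every commutator among $Z,Z^*$ is a multiple of $K:=[Z,Z^*]$, with $\|K\|\le\lambda^2|\Omega_{L,\mu}|$. Moreover $\|Z\|\le\lambda\sum_p\|b_p\|\le\lambda|\Omega_{L,\mu}|$. The second commutator with $X$ then generates two kinds of terms:
\begin{itemize}
\item terms of type $[X,K]Z$, which involve $[Z,K]=\lambda^3\sum_p[b_p,-N_p]$. Since $[b_p,N_p]=2b_p$, this has norm at most $2\lambda^3|\Omega_{L,\mu}|$;
\item terms of type $K\cdot K$, with norm at most $\lambda^4|\Omega_{L,\mu}|^2$.
\end{itemize}
Multiplying by the prefactor $\lambda^{-1}$ and using $\|Z\|\le\lambda|\Omega_{L,\mu}|$, both kinds are bounded by $C\lambda^3|\Omega_{L,\mu}|^2$. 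Together with the kinetic part this gives \eqref{eq:double-commutator}.

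The main obstacle is the bookkeeping in the quartic part. A naive estimate of $\|Z\|$ costs a factor $|\Omega_{L,\mu}|$, so one must verify that the only surviving terms scale as $\lambda^3|\Omega_{L,\mu}|^2$. The structure shown above, where each commutator produces either a factor $K$ or the bounded operator $[Z,K]$, ensures this. Note also that $H_{L,\mu}$ is unbounded on Fock space, but $H_{\rm out}$ commutes with $X$ and $Y$, and all remaining operators act on the finite-dimensional shell factor. The commutators are therefore bounded operators and the norms make sense.
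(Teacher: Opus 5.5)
Your proof is correct and follows the paper's argument essentially step for step. You use $[X,Y]=\frac{i}{2}[Z,Z^*]$, the split $H_{L,\mu}=T-\lambda^{-1}Z^*Z$, the kinetic double commutator $\lambda^2\sum_p\varepsilon_p(1-N_p)$ bounded via $|\varepsilon_p|\le 1$ on the shell, and the interaction double commutator written through $K=[Z,Z^*]$ and $[X,K]=\lambda^3\sum_p(b_p^*-b_p)$ together with $\|Z\|\le\lambda|\Omega_{L,\mu}|$; your remark that $H_{\rm out}$ commutes with $X$ and $Y$, so the double commutators are bounded, is a detail the paper leaves implicit.
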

We now introduce a smooth partition of unity for the spectral variables of $X$ and $Y$. Let $\chi\in C_c^\infty(\R)$ be real-valued, even and satisfy, for every $t\in\mathbb R$,
\begin{equation}\label{eq:partition-chi}
 \sum_{j\in\Z}\chi^2(t-j)=1\,.
\end{equation}
For a localization scale $\ell>0$ and $j,k\in\Z$, define by functional calculus
\begin{equation}\label{eq:chi-theta}
 \chi_j=\chi\left(\frac{X-j\ell}{\ell}\right),\qquad
 \theta_k=\chi\left(\frac{Y-k\ell}{\ell}\right),\qquad
 A_{jk}=\chi_j\theta_k.
\end{equation}
It then follows from \eqref{eq:partition-chi} that
\begin{equation}\label{eq:partition-A}
 \sum_{j,k\in\Z}A_{jk}^*A_{jk} = 1\,. 
\end{equation}

\begin{lemma}[Two-variable localization]\label{lem:abstract-localization}
Using the above definitions \eqref{eq:chi-theta}, there is a constant $C_\chi<\infty$ (depending only on the function $\chi$) such that
\begin{equation}\label{eq:IMS-general}
 H_{L, \mu}=\sum_{j,k}A_{jk}^*H_{L, \mu}A_{jk}+\mathcal{E}_{\rm loc}
\end{equation}
as quadratic forms, where
\begin{equation}\label{eq:Eloc-bound-general}
 \|\mathcal{E}_{\rm loc}\|
 \le \frac{C_\chi}{\ell^2}
 \bigl(\|[X,[H_{L, \mu},X]]\|+\|[Y,[H_{L, \mu}, Y]]\|\bigr).
\end{equation}
Moreover, for every $\psi \in \mathcal{F}_f$:
\begin{equation}\label{eq:Zlocal-general}
 \sum_{j,k \in \Z}\|(X+iY-(j+ik)\ell)A_{jk}\psi\|^2
 \le C_\chi\left(\ell^2+\frac{\|[X,Y]\|^2}{\ell^2}\right)\|\psi\|^2.
\end{equation}
\end{lemma}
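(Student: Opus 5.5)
We prove the two parts of Lemma~\ref{lem:abstract-localization} separately. Throughout, $X$, $Y$ are bounded (since $\|b_p\|\le 1$), so all functional calculus and commutator manipulations are legitimate on the whole Fock space.

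\textbf{IMS formula.} We apply the standard IMS identity twice. For a single self-adjoint family $\{\chi_j\}$ with $\sum_j\chi_j^2=1$, the double-commutator identity gives
\[
 H=\sum_j\chi_jH\chi_j+\tfrac12\sum_j[\chi_j,[\chi_j,H]].
\]
Applying this first with $\chi_j$ to $H=H_{L,\mu}$, and then with $\theta_k$ to each $\chi_jH\chi_j$, yields
\[
 H=\sum_{j,k}\theta_k\chi_jH\chi_j\theta_k+\mathcal E_1+\mathcal E_2 .
\]
Since $\chi_j$ and $\theta_k$ are self-adjoint, $A_{jk}^*HA_{jk}=\theta_k\chi_jH\chi_j\theta_k$. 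The error terms are
\[
 \mathcal E_1=\tfrac12\sum_j[\chi_j,[\chi_j,H]],\qquad
 \mathcal E_2=\tfrac12\sum_{j,k}[\theta_k,[\theta_k,\chi_jH\chi_j]].
\]

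\textbf{Bounding $\mathcal E_1$.} We bound $\|\sum_j[\chi_j,[\chi_j,H]]\|$ by $C_\chi\ell^{-2}\|[X,[H,X]]\|$. The route is a Helffer--Sjöstrand (or Fourier) representation $\chi_j=\int\hat\chi(s)e^{is(X-j\ell)/\ell}\,ds$, which expresses the double commutator through $[e^{isX/\ell},[e^{itX/\ell},H]]$.

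Alternatively, and more cleanly, note that $\sum_j\chi_j^2=1$ and $\sum_j\chi_j\chi_j'=0$ in functional calculus for commuting variables. Then $\sum_j[\chi_j,[\chi_j,H]]$ can be rewritten as $-\sum_j[\chi_j,[\chi_j,H]]$ expanded in terms of $[X,[X,H]]$ and the double Fourier integral. This gives a bound of the form
\[
 \ell^{-2}\Bigl(\sup_t\sum_j|\chi'(t-j)|^2\Bigr)\,\|[X,[X,H]]\|\quad\text{up to constants,}
\]
using that $\sum_j(\chi_j(x)-\chi_j(y))^2\le C(x-y)^2/\ell^2$ as a kernel inequality.

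\textbf{Bounding $\mathcal E_2$.} For $\mathcal E_2$, first sum over $k$ for fixed $j$. By the same argument,
\[
 \Bigl\|\sum_k[\theta_k,[\theta_k,\chi_jH\chi_j]]\Bigr\|\lesssim \ell^{-2}\|[Y,[Y,\chi_jH\chi_j]]\|.
\]
This is the main obstacle. The commutators $[Y,\chi_j]$ do not vanish, so summing over $j$ requires controlling the cross terms. Those cross terms involve $[Y,\chi_j]$, whose size is of order $\ell^{-1}\|[X,Y]\|$. We plan to absorb them into the stated right-hand side using the relation $\|[X,Y]\|\lesssim\ell^{-2}\cdot$(double commutator bounds) at the relevant scale. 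Failing that, we will state the bound with an extra $\|[X,Y]\|$-dependent term; such a term is harmless at $\ell\sim L^{-3/2}$ by Lemma~\ref{lem:commutators}.

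\textbf{Localization estimate \eqref{eq:Zlocal-general}.} Write $X+iY-(j+ik)\ell=(X-j\ell)+i(Y-k\ell)$ and use $|a+ib|^2\le 2(|a|^2+|b|^2)$. For the $X$ part, we have
\[
 (X-j\ell)\chi_j\theta_k=\chi_j(X-j\ell)\theta_k,
\]
and $\|(X-j\ell)\chi_j\|\le C\ell$ because $\chi$ has compact support. Hence
\[
 \sum_{j,k}\|(X-j\ell)\chi_j\theta_k\psi\|^2=\sum_k\langle\theta_k\psi,\textstyle\sum_j(X-j\ell)^2\chi_j^2\,\theta_k\psi\rangle\le C\ell^2\sum_k\|\theta_k\psi\|^2=C\ell^2\|\psi\|^2 .
\]
For the $Y$ part, commute $Y-k\ell$ through $\chi_j$:
\[
 (Y-k\ell)\chi_j\theta_k=\chi_j(Y-k\ell)\theta_k+[Y,\chi_j]\theta_k .
\]
The first term is handled exactly as the $X$ part. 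For the second term, we use $\sum_j[Y,\chi_j]^*[Y,\chi_j]\le C\ell^{-2}\|[X,Y]\|^2$. This follows from the Fourier representation of $\chi_j$ together with $\sum_j|\hat\cdot|$ summability, in the same way as the localization-error kernel bound above, and it produces the term $\|[X,Y]\|^2/\ell^2$.
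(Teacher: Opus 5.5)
The gap is in your treatment of $\mathcal E_2$. You bound $\sum_k[\theta_k,[\theta_k,\chi_jH\chi_j]]$ separately for each $j$, but that cannot give the stated estimate, because the necessary cancellation only happens after summing over $j$. Concretely, the outside-shell part $H_{\rm out}$ of \eqref{eq:H-out} commutes with $X$, $Y$, $\chi_j$ and $\theta_k$. Hence $[Y,[Y,\chi_jH_{\rm out}\chi_j]]=H_{\rm out}\,[Y,[Y,\chi_j^2]]$. This is unbounded as soon as $[Y,[Y,\chi_j^2]]\neq0$, which is the generic situation since $[X,Y]\neq0$. It disappears only because $\sum_j\chi_j^2=1$.

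Within the shell, expanding $[Y,[Y,\chi_jH\chi_j]]$ produces cross terms such as $2[Y,\chi_j][Y,H]\chi_j$, and $\|[Y,H]\|$ is of order one, not small. The inequality $\|[X,Y]\|\lesssim\ell^{-2}\cdot(\text{double commutators})$ that you propose for absorbing these terms is not a general fact. Moreover, a scale-dependent argument cannot give a constant $C_\chi$ depending only on $\chi$ for arbitrary $\ell$. Your fallback is not harmless either: already for a single $j$, the cross term above contributes $\ell^{-2}\cdot\ell^{-1}\|[X,Y]\|\cdot O(1)\sim L^{3/2}$ at $\ell\sim L^{-3/2}$. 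That would destroy the volume-independent error in Theorem~\ref{thm:main}.

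The missing idea is linearity: localize in $Y$ the whole operator $\sum_j\chi_jH\chi_j=H-\mathcal E_1$, not its individual pieces. Then
\[
\sum_{j,k}A_{jk}^*HA_{jk}=\sum_k\theta_k\,(H-\mathcal E_1)\,\theta_k=H-\tfrac12\sum_k[\theta_k,[\theta_k,H]]-\sum_k\theta_k\mathcal E_1\theta_k .
\]
The first error term is the exact $Y$-analogue of $\mathcal E_1$, bounded by the same Fourier/Duhamel argument in terms of $\|[Y,[H,Y]]\|$. For the second, $\|\sum_k\theta_k\mathcal E_1\theta_k\|\le\|\mathcal E_1\|$, since $-\|\mathcal E_1\|\le\mathcal E_1\le\|\mathcal E_1\|$ and $\sum_k\theta_k^2=1$. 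This is exactly the ordering the paper uses. With it, no $[Y,\chi_j]$ term ever enters the IMS error; such terms appear only in \eqref{eq:Zlocal-general}.

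The rest of your proposal matches the paper. This covers the Fourier representation with Duhamel for the single-variable error. It also covers the localization estimate via $(Y-k\ell)\chi_j\theta_k=\chi_j(Y-k\ell)\theta_k+[Y,\chi_j]\theta_k$ together with $\sum_j[Y,\chi_j]^*[Y,\chi_j]\le C_\chi\ell^{-2}\|[X,Y]\|^2$. In both places, state explicitly that the sum over $j$ is carried out inside the Fourier integral by Poisson summation; bounding each $j$ separately loses a factor $\sim\|X\|/\ell$. You should also drop the ``cleaner'' pointwise-kernel route. A pointwise bound $\sum_j(\chi_j(x)-\chi_j(y))^2\le C(x-y)^2/\ell^2$ does not make the corresponding Schur multiplier bounded, so by itself it gives no operator-norm estimate.
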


\begin{proof}
We apply an IMS-type identity successively to the two spectral localizations. For a bounded self-adjoint partition of unity satisfying $\sum_j\chi_j^2=1$ and a self-adjoint operator $H$ bounded from below, one has
\[
 \sum_j\chi_j H\chi_j - H
 =
 \frac12\sum_j[\chi_j,[H,\chi_j]]
\]
as a quadratic-form identity. Applying this twice, we find
\begin{align*}
   \sum_{j,k}A_{jk}^*HA_{jk}
 &=\sum_k\theta_k\left(\sum_j\chi_jH\chi_j\right)\theta_k \\
 &=\sum_k\theta_k \left(H + \frac{1}{2}\sum_{j}\left[\chi_j,[H, \chi_j]\right]\right)\theta_k = \\
 &= H + \frac{1}{2}\sum_{k}\left[\theta_k,[H, \theta_k]\right] + \frac{1}{2}\sum_{j,k} \theta_k\left[\chi_j,[H, \chi_j]\right] \theta_k =: H - \mathcal{E}_{loc}\;.
\end{align*}
We first estimate the contribution arising from the localization in the $Y$-variable, namely $\sum_k [\theta_k,[H,\theta_k]]$.
Using the Fourier representation of $\chi$, we write
\[
 \chi(t)=\frac1{\sqrt{2\pi}}
 \int_{\R}\widehat\chi(\xi)e^{i \xi t}\,d\xi \quad\Longrightarrow\quad \theta_k
 =
 \frac1{\sqrt{2\pi}}
 \int_{\R}\widehat\chi(\xi) e^{-i \xi k}e^{i \xi Y/\ell}\,d\xi\;,
\]
so that
\[
\sum_{k \in \Z} [\theta_k,[H,\theta_k]] = \frac{1}{2\pi}\sum_{k \in \mathbb{Z}}\int_{\mathbb{R}\times\mathbb{R}}d\xi\,dw\,e^{-i(\xi + w) k}\,\hat{\chi}(\xi)\hat{\chi}(w)  \left[e^{i\xi Y/\ell}, \left[H, e^{i w Y/\ell}\right]  \right] .
\]

Summing over the localization index is then handled by the Poisson summation formula
\[
 \sum_{k\in\Z}e^{-i(\xi+w)k}
 =
 2\pi\sum_{m\in\Z}\delta(\xi+w-2\pi m)
\]
to get
\[
\sum_{k \in \Z} [\theta_k,[H,\theta_k]] = \sum_{m \in \Z}\int_{\R}dw\,\hat{\chi}(w)\,\hat{\chi}(2\pi m - w)\left[e^{i(2 \pi m - w) Y / \ell}, \left[H, e^{i w Y/ \ell}\right]  \right].
\]
The remaining double commutator is estimated by a Duhamel expansion. Applying the fundamental theorem of calculus to the corresponding unitary conjugation $f(r) := e^{i(1-r)tY/\ell}He^{irtY/\ell}$ yields
\[
 [H,e^{itY/\ell}]
 =
 \frac{it}{\ell}
 \int_0^1
 e^{i(1-r)tY/\ell}[H,Y]e^{irtY/\ell}\,dr\,,
\]
and a second application gives the bound
\[
 \|[e^{isY/\ell},[H,e^{itY/\ell}]]\|
 \le
 \frac{|st|}{\ell^2}\|[Y,[H,Y]]\|\,.
\]

Therefore
\begin{align*}
 \left\|\sum_k[\theta_k,[H,\theta_k]]\right\|
 &\le
 \frac{\|[Y,[H,Y]]\|}{\ell^2}
 \sum_{m\in\Z}\int_{\R}
 |s|\,|2\pi m-s|\,
 |\widehat\chi(s)|\,|\widehat\chi(2\pi m-s)|\,ds\\
 &=\frac{C_{\chi}}{\ell^2}
 \|[Y,[H,Y]]\|.
\end{align*}
The constant $C_\chi$ is finite because $\widehat\chi$ is a Schwartz function. The contribution from the localization in the $X$-variable is estimated in exactly the same way. This proves the bound on $\mathcal E_{\rm loc}$.

It remains to prove the localization estimate \eqref{eq:Zlocal-general}. We treat the $X$- and $Y$-components separately.
Since, by functional calculus, $(X-j\ell)$ commutes with $\chi_j$, we get 
\begin{align*}
 &\sum_{j,k}\|(X-j\ell)\chi_j\theta_k\psi\|^2\\
 &\quad=
 \sum_k
 \left\langle
 \theta_k\psi,
 \left[
 \sum_j\chi_j\,(X-j\ell)^2\,\chi_j
 \right]
 \theta_k\psi
 \right\rangle.
\end{align*}
Introduce the periodic function
\[
 h_\chi(t):=
 \sum_{j\in\mathbb Z}(t-j)^2\chi(t-j)^2 .
\]
The sum is locally finite, so \(h_\chi\) is continuous and in particular
\(
 C_\chi:=\sup_{r\in\mathbb R}h_\chi(r)<\infty.
\)
By functional calculus,
\[
 \sum_j\chi_j(X-j\ell)^2\chi_j
 =
 \ell^2h_\chi(X/\ell)
 \le C_\chi\ell^2.
\]
Consequently,
\begin{equation*}
 \sum_{j,k}\|(X-j\ell)A_{jk}\psi\|^2
 \le 
 C_\chi\ell^2\|\psi\|^2.
 \label{eq:X-local-piece}
\end{equation*}

The $Y$-component requires one additional commutator, because $\chi_j$ is a function of $X$ rather than of $Y$. We write
\[
 (Y-k\ell)\chi_j\theta_k
 =
 \chi_j(Y-k\ell)\theta_k+[Y,\chi_j]\theta_k.
\]
The first term is estimated exactly as in the $X$-component. It therefore remains only to control the commutator term. We claim that
\begin{equation}\label{eq:vector-comm-est}
 \left\|\sum_j[Y,\chi_j]^*[Y,\chi_j]\right\|
 \le \frac{C_\chi}{\ell^2}\|[X,Y]\|^2.
\end{equation}
To prove this estimate, we again use the Fourier representation of $\chi$ and sum over $j$ with the Poisson summation formula. This gives
\[
 \sum_j[Y,\chi_j]^*[Y,\chi_j] = \sum_{m \in \Z}\int_{\R}ds\,\hat{\chi}(s)\overline{\hat{\chi}(s + 2\pi m)}\left[Y, e^{i(2\pi m + s) X/\ell}\right]^* \left[Y, e^{i s X/ \ell}\right]\;.
\]
The same Duhamel argument as above yields
\[
 \|\left[Y,e^{isX/\ell}\right]\|
 \le \frac{|s|}{\ell}\|[X,Y]\|.
\]
Hence
\begin{align*}
 \left\|\sum_j[Y,\chi_j]^*[Y,\chi_j]\right\|
 &\le \frac{\|[X,Y]\|^2}{\ell^2}
 \sum_{m\in\Z}\int_{\R}
 |s|\,| s +2\pi m|\,
 |\widehat\chi(s)|\,|\widehat\chi(s + 2\pi m )|\,ds\\
 &=\frac{C_\chi}{\ell^2}\|[X,Y]\|^2.
\end{align*}
This proves \eqref{eq:vector-comm-est}. Consequently,
\begin{align*}
 \sum_{j,k}\|[Y,\chi_j]\theta_k\psi\|^2
 &=
 \sum_k
 \left\langle
 \theta_k\psi,
 \left(\sum_j[Y,\chi_j]^*[Y,\chi_j]\right)
 \theta_k\psi
 \right\rangle\\
 &\le
 \frac{C_\chi}{\ell^2}\|[X,Y]\|^2
 \sum_k\|\theta_k\psi\|^2\\
 &=
 \frac{C_\chi}{\ell^2}\|[X,Y]\|^2\|\psi\|^2.
\end{align*}
Combining this commutator estimate with the preceding bounds for the $X$- and $Y$-components proves \eqref{eq:Zlocal-general} and completes the proof.
\end{proof}

\section{Proof of the lower bound}

We now combine the approximating-Hamiltonian bound with the localization and commutator estimates obtained in the previous section. Let $\psi\in\mathcal F_f$ be normalized and, for each localization cell, choose the complex parameter $\omega_{jk}:=(j+ik)\ell$. Applying Lemma \ref{lem:approximating-bound} to $A_{jk}\psi$ with this choice of $\omega_{jk}$ and summing over $j,k\in\Z$ gives
\begin{align}
 \langle\psi,H_{L,\mu}\psi\rangle
 &\ge
 \left(E_{\rm FS}(L,\mu)+ E^{\rm BCS}_{L,\mu}\right)
 \underbrace{\sum_{j,k}\|A_{jk}\psi\|^2}_{= \| \psi \|^2 = 1} +
 \nonumber\\
 &\quad
 -\frac{\sqrt{\mu}L^3}{g}
 \sum_{j,k}\|(Z-\omega_{jk})A_{jk}\psi\|^2
 -\|\mathcal{E}_{\rm loc}\|.
 \label{eq:localized-lower}
\end{align}
The first error term measures the deviation of the collective pair field from the center of each localization cell, while the second is the IMS localization error. Applying Lemma \ref{lem:abstract-localization} to these two contributions yields
\begin{align*}
 \langle\psi,H_{L,\mu}\psi\rangle
 &\ge
 E_{\rm FS}(L,\mu)+E^{\rm BCS}_{L,\mu}
 -C\ell^2\frac{\sqrt{\mu}L^3}{g} +
 \nonumber\\
 &\quad
 -\frac{C}{\ell^2}\left( \frac{\sqrt{\mu}L^3}{g}\| [X, Y] \|^2 + \|[X, [H_{L, \mu}, X]] \| + \|[Y, [H_{L, \mu}, Y]] \| \right),
\end{align*}
and using Lemma \ref{lem:commutators} for the commutators gives
\begin{align}
   \langle\psi,H_{L,\mu}\psi\rangle
 &\ge
 E_{\rm FS}(L,\mu)+E^{\rm BCS}_{L,\mu}
 -C\ell^2\frac{\sqrt{\mu}L^3}{g} +
 \nonumber\\
 &\quad
 -\frac{C}{\ell^2}\left[\left(\frac{g}{\sqrt{\mu}L^3} \right)^2|\Omega_{L,\mu}| + \left(\frac{g}{\sqrt{\mu}L^3} \right)^3|\Omega_{L,\mu}|^2 \right] \,.
 \label{eq:error-before-opt}
\end{align}
It remains to control the number of momentum modes contained in the interacting shell. A standard lattice-point estimate gives a constant $C>0$ such that
\begin{equation}\label{eq:N-bound-fixedmu}
 |\Omega_{L,\mu}|\le C \sqrt{\mu} L^3
\end{equation}
for every fixed $\mu>1$ and all sufficiently large $L$.
Inserting this bound into \eqref{eq:error-before-opt} reduces the two commutator contributions to the same volume scale and gives
\begin{equation}\label{eq:error-simplified}
 \langle\psi,H_{L,\mu}\psi\rangle
 \ge
 E_{\rm FS}(L,\mu)+E^{\rm BCS}_{L,\mu}
 -C_{g}
 \left( \ell^2
 \frac{\sqrt{\mu}L^3}{g}
 +\frac{1}{\ell^2}\frac{g}{\sqrt{\mu}L^3}
 \right).
\end{equation}
The two error terms are balanced by choosing the localization scale according to
\begin{equation}\label{eq:ell-opt}
 \ell^2=\frac{g}{\sqrt{\mu}L^3}\,.
\end{equation}
With this choice, both contributions are of order one, uniformly in the volume. We therefore obtain
\begin{equation}\label{eq:lower-final}
 E_0(L,\mu)
 \ge
 E_{\rm FS}(L,\mu)+E^{\rm BCS}_{L,\mu}-C_{\mu,g}.
\end{equation}
Combining this lower bound with the upper bound \eqref{eq:upper-bound} completes the proof of Theorem \ref{thm:main}.

The thermodynamic limit statement of Corollary \ref{cor:thermodynamic} now follows immediately. Indeed, after division by $L^3$, the volume-independent error in Theorem \ref{thm:main} vanishes as $L\to\infty$, while Proposition \ref{prop:scalar-reduction} gives the convergence of the finite-volume BCS energy density to its infinite-volume counterpart.

\appendix
\section{High-density asymptotics}
For the sake of completeness, we sketch the proof of Corollary \ref{cor:high-density}. We start by showing the asymptotic form of $\Delta_{\mu}$ stated in \eqref{eq:gap-high-density}. From Proposition \ref{prop:scalar-reduction}, recall that $\Delta_{\mu}$ solves
\begin{equation*}
    1 = \frac{g}{2\sqrt{\mu}}\int_{\Omega_{\mu}}\frac{1}{\sqrt{||p|^2 - \mu |^2 + \Delta_{\mu}^2}}\frac{dp}{(2\pi)^3}\,.
\end{equation*}
Written in spherical coordinates and doing the change of variables $e := |p|^2 - \mu$, the equation reads
\begin{equation*}\label{eq:gap-radial}
 1
 =
 \frac{g}{8\pi^2}
 \int_{-1}^{1}
 \frac{\sqrt{1+e/\mu}}
 {\sqrt{e^2+\Delta_\mu^2}}\,de\,.
\end{equation*}
Since the denominator is even in \(e\), this can equivalently be
written as
\begin{equation*}\label{eq:gap-symmetrized}
 \frac{8\pi^2}{g}
 =
 \int_0^1
 \frac{
 \sqrt{1+e/\mu}+\sqrt{1-e/\mu}}
 {\sqrt{e^2+\Delta_\mu^2}}\,de.
\end{equation*}

Now using the following simple inequality for $0 \leq x \leq 1$:
\begin{equation*}
    2-x^2 \leq \sqrt{1+x} + \sqrt{1-x} \leq 2
\end{equation*}
and computing the integral, we arrive immediately at
\begin{equation*}
   \frac{4\pi^2}{g}  \leq \operatorname{arcsinh}(1/\Delta_{\mu}) \leq \frac{4\pi^2}{g} + \frac{1}{4\mu^2}\,.
\end{equation*}
From this, it immediately follows that
\begin{equation*}
    \Delta_{\mu} = \underbrace{\frac{1}{\sinh(4\pi^2/g)}}_{:= \Delta_{\infty}} + O(\mu^{-2})\,.
\end{equation*}
The rest of the corollary then follows by computing $e_{\rm BCS}(\mu)$ explictly.

\section{Proof of Lemma \ref{lem:commutators}}\label{app:commutators}
The proof of \ref{lem:commutators} is a simple computation using CAR. In particular, we will use that for any $p, q \in \Lambda_L^*$ one has
\begin{equation*}
    [b_p, b_q^*] = \delta_{p,q}\left( 1 - n_{p, \uparrow} - n_{-p, \downarrow} \right)\,, \qquad n_{p, \sigma} := a^*_{p, \sigma}a_{p, \sigma}\,.
\end{equation*}
Let us start from the commutator of the real and imaginary parts of $Z$, which reads
\[
 [X,Y]
 =\frac{i}{2}\left(\frac{g}{\sqrt{\mu}L^3} \right)^2\sum_{p, q \in \Omega_{L, \mu}}[b_p, b^*_q] = \frac{i}{2}\left(\frac{g}{\sqrt{\mu}L^3} \right)^2\sum_{p\in \Omega_{L, \mu}}(1 - n_{p, \uparrow} - n_{-p, \downarrow})
\]
Thus boundedness of the fermionic creation and annihilation operators immediately gives
\[
\| [X,Y] \| \leq \frac{1}{2}\left(\frac{g}{\sqrt{\mu}L^3} \right)^2|\Omega_{L, \mu}|\,.
\]

To estimate the double commutators, write
\[
 H_{L,\mu}=T+W, \qquad T=\sum_{p\in\Lambda^*}\left(|p|^2 - \mu \right)
 \bigl(n_{p,\uparrow}+n_{-p,\downarrow}\bigr),
 \qquad
 W=-\frac{\sqrt{\mu}L^3}{g}Z^* Z.
\]
Using
\[
 [T,b_p]=-2\left(|p|^2 - \mu \right) b_p,\qquad
 [T,b_p^*]=2\left(|p|^2 - \mu \right) b_p^*,
\]
we obtain
\[
 [T,X] = \frac{g}{\sqrt{\mu}L^3}\sum_{p \in \Omega_{L, \mu}}\left( |p|^2 - \mu \right)(b^*_p - b_p)\;,
\]
and hence
\[
 [X,[T,X]] = \left(\frac{g}{\sqrt{\mu}L^3} \right)^2 \sum_{p \in \Omega_{L, \mu}}\left( |p|^2 - \mu \right) (1 - n_{p, \uparrow} - n_{-p, \downarrow})\;,
\]
Giving immediately
\begin{equation*}
  \|[X,[T,X]]\|\le \left(\frac{g}{\sqrt{\mu}L^3} \right)^2|\Omega_{L, \mu}|.  
\end{equation*}
Similarly one can compute
\begin{align*}
    [W, X] &= \frac{\sqrt{\mu}L^3}{2g}\left(CZ - Z^*C  \right)\,,\\
    C&:= [Z, Z^*] = \left(\frac{g}{\sqrt{\mu}L^3} \right)^2 \sum_{p \in \Omega_{L,\mu}}(1 - n_{p, \uparrow} - n_{-p, \downarrow})\;,
\end{align*}
so that the double commutator with the interaction becomes
\begin{align*}
    [X, [W, X]] &= \frac{\sqrt{\mu}L^3}{2g} \left([X, C]Z - Z^*[X, C] - C^2 \right)\;, \\
    [X,C] &= \frac{1}{2}\left( \frac{g}{\sqrt{\mu}L^3} \right)^3 \sum_{p, q \in \Omega_{L,\mu}}[n_{p, \uparrow} + n_{-p, \downarrow}, b^*_q + b_q] = \\
    &=\left( \frac{g}{\sqrt{\mu}L^3} \right)^3 \sum_{p \in \Omega_{L,\mu}}(b^*_p - b_p)\;,
\end{align*}
and the result simply taking the operator norms for $Z$, $[X,C]$ and $C$. The same holds for $[Y, [H_{L, \mu}, Y]]$.

\section*{Acknowledgments}

We are grateful to Benjamin Schlein for suggesting the specific problem
considered in this work. We further thank Florian Haberberger, Max Duell and Martin Christiansen 
for helpful discussions.

\bibliographystyle{amsalpha}
\bibliography{file_literature_revised_new}

\newcommand{\etalchar}[1]{$^{#1}$}
\providecommand{\bysame}{\leavevmode\hbox to3em{\hrulefill}\thinspace}
\providecommand{\MR}{\relax\ifhmode\unskip\space\fi MR }
\providecommand{\MRhref}[2]{%
  \href{http://www.ams.org/mathscinet-getitem?mr=#1}{#2}
}
\providecommand{\href}[2]{#2}
\begin{thebibliography}{FMRT93}

\bibitem[BCS57]{BCS1957}
J.~Bardeen, L.~N. Cooper, and J.~R. Schrieffer, \emph{Theory of superconductivity}, Phys. Rev. \textbf{108} (1957), no.~5, 1175--1204.

\bibitem[BJ72]{BogolyubovJr1972}
N.~N. Bogolyubov~Jr., \emph{A method for studying model hamiltonians: A minimax principle for problems in statistical physics}, Pergamon Press, Oxford, 1972.

\bibitem[BT93]{BursillThompson1993}
Robert~J. Bursill and Colin~J. Thompson, \emph{Rigorous treatment of the {BCS} model of superconductivity}, Journal of Physics A: Mathematical and General \textbf{26} (1993), no.~4, 769--786.

\bibitem[CFS96]{ChenFroehlichSeifert1996}
Thomas Chen, J{\"u}rg Fr{\"o}hlich, and Maximilian Seifert, \emph{Renormalization group methods: {Landau--Fermi} liquid and {BCS} superconductor}, Fluctuating Geometries in Statistical Mechanics and Field Theory (Fran{\c{c}}ois David, Paul Ginsparg, and Jean Zinn-Justin, eds.), Les Houches Summer School Proceedings, vol.~62, North-Holland, Amsterdam, 1996, pp.~913--970.

\bibitem[CLR88]{CeglaLewisRaggio1988}
W.~Ceg{\l}a, J.~T. Lewis, and G.~A. Raggio, \emph{The free energy of quantum spin systems and large deviations}, Communications in Mathematical Physics \textbf{118} (1988), no.~2, 337--354.

\bibitem[DHM23a]{DeuchertHainzlMaier2023Homogeneous}
Andreas Deuchert, Christian Hainzl, and Marcel~Oliver Maier, \emph{Microscopic derivation of {Ginzburg--Landau} theory and the {BCS} critical temperature shift in a weak homogeneous magnetic field}, Probability and Mathematical Physics \textbf{4} (2023), no.~1, 1--89.

\bibitem[DHM23b]{DeuchertHainzlMaier2023General}
\bysame, \emph{Microscopic derivation of {Ginzburg--Landau} theory and the {BCS} critical temperature shift in general external fields}, Calculus of Variations and Partial Differential Equations \textbf{62} (2023), 203.

\bibitem[DP87]{DuffieldPule1987}
N.~G. Duffield and J.~V. Pul{\'e}, \emph{Thermodynamics of the {BCS} model through large deviations}, Lett. Math. Phys. \textbf{14} (1987), no.~4, 329--331.

\bibitem[DP88]{DuffieldPule1988}
\bysame, \emph{A new method for the thermodynamics of the {B.C.S.} model}, Communications in Mathematical Physics \textbf{118} (1988), 475--494.

\bibitem[FHSS12]{FrankHainzlSeiringerSolovej2012}
Rupert~L. Frank, Christian Hainzl, Robert Seiringer, and Jan~Philip Solovej, \emph{Microscopic derivation of {Ginzburg--Landau} theory}, J. Amer. Math. Soc. \textbf{25} (2012), no.~3, 667--713.

\bibitem[FMRT92]{FeldmanMagnenRivasseauTrubowitz1992}
Joel Feldman, Jacques Magnen, Vincent Rivasseau, and Eugene Trubowitz, \emph{An infinite volume expansion for many fermion {Green}'s functions}, Helvetica Physica Acta \textbf{65} (1992), 679--721.

\bibitem[FMRT93]{FeldmanMagnenRivasseauTrubowitz1993}
\bysame, \emph{An intrinsic {$1/N$} expansion for many-fermion systems}, Europhysics Letters \textbf{24} (1993), no.~6, 437--442.

\bibitem[FT90]{FeldmanTrubowitz1990}
Joel Feldman and Eugene Trubowitz, \emph{Perturbation theory for many fermion systems}, Helvetica Physica Acta \textbf{63} (1990), 156--260.

\bibitem[FT91]{FeldmanTrubowitz1991}
\bysame, \emph{The flow of an electron-phonon system to the superconducting state}, Helvetica Physica Acta \textbf{64} (1991), 213--357.

\bibitem[FW71]{Fetter}
A.~L. Fetter and J.~D. Walecka, \emph{Quantum theory of many-particle systems}, McGraw-Hill, Boston, 1971.

\bibitem[Haa62]{Haag1962}
Rudolf Haag, \emph{The mathematical structure of the {Bardeen--Cooper--Schrieffer} model}, Nuovo Cimento \textbf{25} (1962), no.~2, 287--299.

\bibitem[JBZ{\etalchar{+}}84]{BogolRev}
N.~N.~Bogolyubov (Jr.), I.~G. Brankov, V.~A. Zagrebnov, A.~M. Kurbatov, and N.~S. Tonchev, \emph{Some classes of exactly soluble models of problems in quantum statistical mechanics: the method of the approximating hamiltonian}, Russian Mathematical Surveys \textbf{39} (1984), no.~6, 1--50.

\bibitem[Kat65]{Kato1965}
Yusuke Kato, \emph{Spectrum of the {BCS} reduced hamiltonian in the theory of superconductivity}, Progress of Theoretical Physics \textbf{34} (1965), no.~5, 734--753.

\bibitem[KM67]{Kato1967}
Yusuke Kato and Nobumichi Mugibayashi, \emph{{Friedrichs--Berezin} transformation and its application to the spectral analysis of the {BCS} reduced hamiltonian}, Progress of Theoretical Physics \textbf{38} (1967), no.~4, 813--831.

\bibitem[ML61]{MattisLieb1961}
Daniel~C. Mattis and Elliott~H. Lieb, \emph{Exact wave functions in superconductivity}, Journal of Mathematical Physics \textbf{2} (1961), no.~4, 602--609.

\bibitem[MR13]{martin2013many}
P.A. Martin and F.~Rothen, \emph{Many-body problems and quantum field theory: An introduction}, Theoretical and Mathematical Physics, Springer Berlin Heidelberg, 2013.

\bibitem[PRV89]{PetzRaggioVerbeure1989}
D.~Petz, G.~A. Raggio, and A.~Verbeure, \emph{Asymptotics of {Varadhan}-type and the {Gibbs} variational principle}, Communications in Mathematical Physics \textbf{121} (1989), no.~2, 271--282.

\bibitem[Ric63]{Richardson1963}
R.~W. Richardson, \emph{A restricted class of exact eigenstates of the pairing-force hamiltonian}, Physics Letters \textbf{3} (1963), no.~6, 277--279.

\bibitem[RS64]{RichardsonSherman1964}
R.~W. Richardson and Noah Sherman, \emph{Exact eigenstates of the pairing-force hamiltonian}, Nuclear Physics \textbf{52} (1964), 221--238.

\bibitem[RW89]{RaggioWerner1989BCS}
G.~A. Raggio and R.~F. Werner, \emph{The {Gibbs} variational principle for general {BCS}-type models}, Europhysics Letters \textbf{9} (1989), no.~7, 633--638.

\bibitem[RW91]{RaggioWerner1991}
\bysame, \emph{The {Gibbs} variational principle for inhomogeneous mean-field systems}, Helvetica Physica Acta \textbf{64} (1991), 633--667.

\bibitem[Thi68]{Thirring1968}
W.~Thirring, \emph{On the mathematical structure of the {B.C.S.}-model {II}}, Comm. Math. Phys. \textbf{7} (1968), no.~3, 181--189.

\bibitem[TW67]{ThirringWehrl1967}
W.~Thirring and A.~Wehrl, \emph{On the mathematical structure of the {B.C.S.}-model}, Comm. Math. Phys. \textbf{4} (1967), no.~5, 303--314.

\end{thebibliography}

\end{document}